\documentclass[3p]{elsarticle}
\usepackage{geometry} \geometry{left=1.3in, right=1.3in, top=1in, bottom=1in, includefoot, headheight=13.6pt}
\usepackage{amsfonts}
\usepackage{mathtools}
\usepackage{amsthm}
\usepackage{amsmath}
\usepackage{mathdots}
\usepackage{bbm}
\usepackage{hyperref} 
\usepackage{cleveref}
\usepackage{color}
\usepackage{subcaption}
\usepackage{natbib}
\usepackage{multirow}
\bibliographystyle{elsarticle-harv}
\setcitestyle{authoryear, open={[},close={]}}
\setlength{\bibsep}{0.0pt}
\usepackage[linesnumbered,ruled,vlined]{algorithm2e}

\usepackage{tikz}
\usetikzlibrary{arrows.meta}
\usetikzlibrary{shapes} 

\usepackage{array}
\usepackage{booktabs} 

\usepackage[symbol]{footmisc}

\makeatletter
\def\ps@pprintTitle{
 \let\@oddhead\@empty
 \let\@evenhead\@empty
 \def\@oddfoot{\centerline{\thepage}
 }
 \let\@evenfoot\@oddfoot}

\renewcommand{\P}{\mathbb{P}}
\newcommand{\Q}{\mathbb{Q}}
\newcommand{\E}{\mathbb{E}}
\newcommand{\F}{\mathcal{F}}
\newcommand{\G}{\mathcal{G}}

\newcommand{\T}{\mathcal{T}}
\newcommand{\TTT}{\mathbf{T}}
\newcommand{\RRR}{\boldsymbol{\mathcal{R}}}
\newcommand{\RR}{\mathcal{R}}
\newcommand{\R}{\mathbb{R}}
\newcommand{\FF}{\mathbb{F}}

\newcommand{\ES}{\mathbb{ES}}

\renewcommand{\H}{\mathcal{H}}
\newcommand{\N}{\mathcal{N}}
\renewcommand{\d}{\mathrm{d}}
\newcommand{\e}{\mathrm{e}}
\newcommand{\1}{\mathbf{1}}
\newcommand{\w}{\mathbf{w}}
\newcommand{\h}{\mathbf{h}}
\renewcommand{\S}{\mathcal{S}}

\renewcommand{\O}{\mathcal{O}}

\newtheorem{thm}{Theorem}[section]
\newtheorem{cor}{Corollary}[section]

\newtheorem{ass}{Assumption}

\newtheorem{rem}{Remark}[section]
\newtheorem{exm}{Example}

\newtheorem{prop}[thm]{Proposition}

\numberwithin{equation}{section}

\SetKwInput{KwInput}{Input}                
\SetKwInput{KwOutput}{Output}              

\definecolor{darkgreen}{RGB}{1, 100, 32}
\definecolor{lightgreen}{rgb}{0.465, 0.933, 0.565}

\newcommand{\myappendix}{
\appendix
\renewcommand{\thesection}{\Alph{section}}
\section*{Appendix}
}
\title{Pricing and hedging the prepayment option of mortgages under stochastic housing market activity\texorpdfstring{\footnote[2]{The views expressed in this paper are the personal views of the authors and do not necessarily reflect the views or policies of their current or past employers. The authors have no competing interests.}}{}}

\author[1]{Leonardo Perotti\texorpdfstring{\corref{cor1}}{}}
\ead{L.Perotti@uu.nl}
\author[1,2]{Lech A.~Grzelak}
\ead{L.A.Grzelak@uu.nl}
\author[1]{Cornelis W.~Oosterlee}
\ead{C.W.Oosterlee@uu.nl}
\cortext[cor1]{Corresponding author.}
\address[1]{Mathematical Institute, Utrecht University, Utrecht, the Netherlands}
\address[2]{Financial Engineering, Rabobank, Utrecht, the Netherlands}

\date{\today}

\begin{document}

\begin{abstract}
    \noindent Prepayment risk embedded in fixed-rate mortgages forms a significant fraction of a financial institution's exposure. The embedded prepayment option bears the same interest rate risk as an exotic interest rate swap with a suitable stochastic notional. Focusing on penalty-free prepayment because of the contract owner's relocation to a new house, we model the prepayment option value as an European-type interest rate receiver swaption with stochastic maturity matching the stochastic time of relocation. This is a convenient representation since it allows us to compute the prepayment option value in terms of well-known pricing formulas for European-type swaptions.
    We investigate the effect of a stochastic housing market activity as the explanatory variable for the distribution of the relocation time, as opposed to the conventional assumption of a deterministic housing market activity. We prove that the housing market covariance drives the prepayment option price difference between the stochastic setting and its deterministic counterpart.
    The prepayment option exposure is hedged using market instruments based on Delta-Gamma replication. Furthermore, since the housing market activity is a non-tradable risk factor, we perform non-standard actuarial hedging focusing on controlling the prepayment option exposure yield by risky housing market scenarios. 

\end{abstract}

\begin{keyword}
    Mortgages \sep prepayment \sep relocation \sep option pricing \sep hedging \sep risk management.
\end{keyword}

\maketitle


\section{Introduction}
\label{sec: Intro}

Mortgages constitute one of the main exposures on a financial institution's balance sheet. Hence, a thorough analysis of their risks is crucial. Unrealistic default-free, non-callable fixed-interest rate mortgages bear \emph{interest rate risk} since banks' funding is indexed to (floating) market rates. However, the risk is neutralized by a suitable position in interest rate swaps (IRSs). A different story applies when we consider real mortgage contracts that generally allow for counterparty default and a counterparty's call of some embedded optionality. The most significant among such optionalities is the right to partially or totally repay the outstanding mortgage notional in advance, i.e. the \emph{embedded prepayment option} (EPO). A scrupulous modeler should account for these features while pricing (and hedging!) a portfolio of mortgages. Failing to do so, may lead to a significant mismatch between the expected and the realized exposures.

With a focus on the Dutch mortgage market, however, the occurrence of a counterparty default is so rare as to be considered negligible. Hence, the attention can be dedicated to the assessment of the EPO.\footnote{This would not be true for other mortgage markets, such as the US market. In that framework, both default and prepayment are required to be addressed.} On this topic extensive research has been conducted over the years, leading to two main classes of models: \emph{financially rational models} (sometimes also called \emph{option theoretic models}) and \emph{empirical models}. Financially rational models assume a mortgage owner exercises the prepayment option optimally as soon as the cost of continuing with the original mortgage is higher than the cost of entering into a new mortgage repriced at the prevailing market rates, see, among others, \citep{dunn1981comparison,stanton1995rational}. The main criticality of financially rational models is the assumption of optimal exercise from a usually non-financially educated counterparty, often leading to an overestimate of the actual EPO value. This is also the main justification for empirical models that use historical data to forecast prepayment behavior. In such a framework, the model aims to exploit client- and contract-specific features to predict prepayment behavior. This approach requires a deep understanding and inference of the explanatory variables for the prepayment occurrences, and a rich literature exists \citep{alink2002mortgage,charlier2003prepayment,kalotay2004option,hoda2007implementation,lin2010determinants,caspari2018valuation}. Within the spectrum of prepayment models, we find also \emph{extended exogenous models} that augment the financially rational models including exogenous/empirical features in the EPO payoff formulation. Recent examples of this hybrid class are \citep{casamassima2022pricing,perotti2024modeling}, where the EPO exercise is a function of market risk factors, however, the magnitude of the prepayment is computed based on data-driven relationships.

A peculiar aspect of the Dutch regulation concerns mortgage prepayment in case of relocation to a new house by the mortgage \emph{owner}, sometimes called \emph{borrower}. In particular, the owner has two viable options: fully prepaying the outstanding notional penalty-free or ``taking'' the old mortgage to the new house (``porting option'') \citep[see][]{hassink2011importance}. While penalty-free prepayment in case of relocation is common practice in several countries, the porting option is specific to the Dutch mortgage market. Not surprisingly, a financially rational owner will exercise the porting option if the prevailing mortgage rate is higher than the fixed rate locked in by the old mortgage; on the other hand, the owner will prepay when the prevailing mortgage rate is lower than the old contractual fixed rate. In other words, at the relocation time (RT), a mortgage owner has the right (but not the obligation) to reprice the old mortgage at the prevailing fixed interest rate.

Since most people who decide to relocate rely on the expert judgment of a market professional (e.g., the majority of people go through a real estate agency to assist them in the process) it is realistic to assume the decision to prepay or bring the old mortgage to the new house is taken according to financially rational criteria. Hence, we aim to represent the price of the \emph{embedded prepayment option due to relocation} (EPOR) with a European-type payoff, where the exercise time coincides with the RT, and so it is stochastic.

The EPOR value is then dependent on two main market drivers. The level of the market rates is responsible for the option being in or out of the money, while the housing market's (HM) activity affects the RT distribution. Given these two components, the valuation of the EPOR is straightforward. 

\subsection{Contribution}

The research in this paper focuses on the impact of the housing market on the valuation of the EPO. Assuming a rational exercise of the prepayment-porting option, the moment a relocation to a new house is observed, we use standard option pricing theory to value prepayment because of relocation, conditional on the relocation occurrence. Hence, an important building block is the description of the relocation distribution itself. We start from the work of \citep{lando1998cox}, where a Cox process represents the default time distribution. The same machinery is employed here to describe the unknown relocation time distribution. The underlying variable is, in our framework, the \emph{housing market activity}, i.e. the fraction of transactions observed in the housing market.

We investigate the impact on the EPO value of a stochastic future housing market, as opposed to the choice for a simpler model that assumes a deterministic housing market. We provide a convenient pricing formula that requires the evaluation of the \emph{expected relocation density} by means of a basic numerical integral. We assess the effect of different choices for the stochastic driver of the housing market activity. Eventually, we perform hedging of the EPO exposure based on matching the market instruments ``Greeks,'' such as its Delta and Gamma. As a second hedging experiment, we take a risk-management perspective and build a replicating strategy that protects the financial institution against extreme realizations of the housing market.

The remainder of the paper is organized as follows. The model is detailed in \Cref{sec: methodology}. Particularly, \Cref{ssec: PayoffPrepayment} presents the EPO payoff; \Cref{ssec: ModelingRelocation} is dedicated to the modeling of the relocation distribution using a Cox process and the pricing of the EPO; in \Cref{ssec: NonlinearAdjustment}, we study the main factors that affect the relocation value compared to a deterministic model. \Cref{sec: Hedging} describes the general hedging problem. In \Cref{ssec: ExplainableHedge}, we propose a hedging strategy bearing intuitive economic meaning; whereas \Cref{ssec: ActuarialHedge} addresses the need for a robust hedge under housing market activity uncertainty. In \Cref{sec: DataCalibration}, we illustrate the method employed for the relocation density calibration. \Cref{sec: Experiments} is devoted to the numerical experiments. \Cref{sec: Conclusion} provides a summary of the research conducted and concludes.

\section{Methodology}
\label{sec: methodology}

The purpose of this paper is to define the prepayment price for a fixed-rate mortgage contract due to a borrower's relocation to a new house as an EU-type swaption with stochastic maturity equal to the random time of relocation.

\subsection{Payoff of the prepayment option at time of relocation}
\label{ssec: PayoffPrepayment}

We first observe that fixed-rate mortgages can be considered -- from an interest rate risk perspective -- as amortizing interest rate swaps (IRSs) with notional and tenor characteristics consistent with the contractual mortgage. This representation will be convenient when defining the payoff of the prepayment option in case of relocation in terms of the payoff of well-known market instruments.
\begin{ass}[Equivalence between fixed rate mortgages and IRSs]
\label{ass: EquivalenceSwapMortgage}
    Let $N$ represent the mortgage contractual amortization scheme, i.e. $N(t)$ represents the outstanding notional at time $t$, and let $K$ be the mortgage contractual fixed rate. In a frictionless market, by coupling the fixed-rate mortgage with a suitable set of floating rate notes(FRNs) -- that always trade at par \citep{brigo2006interest} --  the financial institution's net exposure is equivalent to an amortizing IRS exposure, with notional schedule $N(t)$ and fixed-rate $K$ \citep[see, for example,][]{perotti2024modeling}.
\end{ass}

We represent the value at time $T$ of a receiver IRS with starting date $t_0$, payment dates $\T_{\tt{p}}=\{t_1,\dots,t_{n}\}$, $t_j>t_0$, $j=1,\dots,n$, and strike $K$, as follows:
\begin{equation}
\label{eq: IRSPayoff}
    S(T,K) = A(T)(K-\kappa(T)),\qquad t_0\leq T<t_n,
\end{equation}
with IRS annuity factor, $A(T)$, and prevailing interest rate, $\kappa(T)$, respectively defined as:
\begin{align}
\label{eq: Annuity}
    A(T)&\equiv A(T;\T_{\tt{p}},N)=\sum_{\substack{t_j\in\T_{\tt{p}}\\ t_j> T}}N(t_{j-1})(t_j-t_{j-1}) P(T;t_{j}),\\
    \label{eq: SwapRate}
    \kappa(T)&\equiv \kappa(T;\T_{\tt{p}},N)=\frac{1}{A(T)}\sum_{\substack{t_j\in\T_{\tt{p}}\\ t_j> T}}N(t_{j-1})\big(P(T;t_{j-1})-P(T;t_j)\big),
\end{align}
for $P(T; S)$ the price at time $T$ of a zero coupon bond (ZCB) with maturity $S\geq T$. Observe that, for the sake of simplicity, we write explicitly the dependence of the value on time $T$ and fixed rate $K$ only. $\T_{\tt{p}}$ and $N(t)$ are, indeed, fixed.

\begin{figure}[t]
    \centering
    \resizebox{0.9\textwidth}{!}{
    \begin{tikzpicture}

\draw[->, ultra thick] (-1.3,0.5) -- (11.5,0.5) node[anchor=north] {};

\draw[dashed, ultra thick] (4.2,0.5) -- (4.2,4.8) node[anchor=south] {};

\node[circle, fill=black, inner sep=2pt, label=below:{$t_0$}] at (-1., 0.5) {};
\node[circle, fill=black, inner sep=2pt, label=below:{$T^*$}] at (11., 0.5) {};
\node[circle, fill=black, inner sep=2pt, label=below:{$\tau$}] at (4.2, 0.5) {};

\node at (1.5,4.5) {exposure at $T \leq \tau$};
\node at (7.5,4.5) {exposure at $T > \tau$};

\node[draw, rectangle, fill=green!20, minimum width=3.2cm, minimum height=1.3cm, very thick] (leftbox) at (1.,2.5) {$A(T)(K-\kappa(T))$};

\node[draw, rectangle, fill=green!20, minimum width=3.2cm, minimum height=1.3cm, very thick] (rightbox1) at (9.,3.5) {$A(T)(K-\kappa(T))$};
\node[draw, rectangle, fill=red!20, minimum width=3.2cm, minimum height=1.3cm, very thick] (rightbox3) at (9.,1.5) {$A(T)(\kappa(\tau)-\kappa(T))$};

\node[draw, ellipse, fill=yellow!20, minimum size=0.25cm] (relocbox) at (4,1.5) {};

\draw[-, thick] (leftbox.east) -- (4,3.5) -- (rightbox1.west);
\draw[-, thick] (leftbox.east) -- (4,1.5) -- (rightbox1.west);
\draw[-, thick] (relocbox.east) -- (6,1.5) -- (rightbox3.west);

\node[draw, ellipse, fill=yellow!20, minimum size=0.25cm, thick] at (4.2,3.5) {\footnotesize no reloc.};
\node[draw, ellipse, fill=yellow!20, minimum size=0.25cm, thick] (relocbox) at (4.2,1.5) {\footnotesize reloc.};
\node[draw, ellipse, fill=cyan!20, minimum size=0.25cm, thick] at (5.9,2.6) {\footnotesize port.};
\node[draw, ellipse, fill=cyan!20, minimum size=0.25cm, thick] at (6.1,1.5) {\footnotesize prep.};

\end{tikzpicture}}
    \caption{\footnotesize Scheme of the exposure evolution given the occurrence of relocation, porting, and prepayment.}
    \label{fig: RelocationPortingPrepayment}
\end{figure}

\begin{figure}[b]
    \centering
    \resizebox{0.9\textwidth}{!}{
    \begin{tikzpicture}

\draw[->, ultra thick] (-0.3,-0.3) -- (12.5,-0.3) node[anchor=north] {};

\node[draw, rectangle, fill=green!20, minimum width=12.cm, minimum height=1.5cm, very thick] at (6.0,4.25) {$A(T)(K-\kappa(T))$};

\node[draw, rectangle, fill=green!20, minimum width=4.cm, minimum height=3.5cm, very thick] at (2.0,1.75) {$A(T)(K-\kappa(T))$};

\node[draw, rectangle, fill=green!20, minimum width=8.cm, minimum height=1.75cm, very thick] at (8.0,2.625) {$A(T)(K-\kappa(T))$};

\node[draw, rectangle, fill=red!20, minimum width=8.cm, minimum height=1.75cm, very thick] at (8.0,0.875) {$A(T)(\kappa(\tau)-\kappa(T))$};

\node at (0.1,4.75) [right] {\footnotesize $T\leq T^*< \tau$: no relocation};
\node at (0.1,3.25) [right] {\footnotesize $T<\tau\leq T^*$: before reloc.};
\node at (4.1,3.25) [right] {\footnotesize $\tau\leq T\leq T^*$, $\kappa(\tau) \geq K$: after reloc. + porting option};
\node at (4.1,1.5) [right] {\footnotesize $\tau\leq T\leq T^*$, $\kappa(\tau) < K$: after reloc. + prepayment option};

\node[circle, fill=black, inner sep=2pt, label=below:{$t_0$}] at (0.0, -0.3) {};
\node[circle, fill=black, inner sep=2pt, label=below:{$T^*$}] at (12., -0.3) {};
\node[circle, fill=black, inner sep=2pt, label={below:{$\tau$}}] at (4., -0.3) {};

\node at (6.,5.3) {Exposure at $T\in[t_0,T^*]$ for different scenarios};

\end{tikzpicture}}
    \caption{\footnotesize Scheme of the exposure evolution given the occurrence of relocation, porting, and prepayment.}
    \label{fig: RelocationNoRelocation}
\end{figure}

In the following, we illustrate why the prepayment option because of relocation can be represented as an EU-type IR swaption with maturity equal to the RT. Notation-wise, we denote the mortgage end date\footnote{In practice, we are interested in the end of the fixed-rate period that may not coincide with the end date of the mortgage.} as $T^*=t_n$ and the RT as $\tau$.

Let us take the perspective of a mortgage issuer sitting at time $t_0$, and observe how prepayment due to relocation may affect the issuer's exposure. According to \Cref{ass: EquivalenceSwapMortgage}, as long as no relocation occurs before the end date of the mortgage, i.e. $\tau> T^*$, a fixed rate mortgage generates a time-$T$ exposure equal to $S(T,K)$ for every $T\in[t_0,T^*]$. Similarly, when a relocation occurs at $\tau\leq T^*$, but the prevailing mortgage rate is not more convenient than the contractual rate, i.e. $\kappa(\tau)\geq K$, the time $T$ exposure amounts at $S(T,K)$ for every $T\in[t_0,T^*]$ since, after relocation, the mortgage owner exercises the porting option benefitting of the contractual, low fixed rate (compared to the prevailing rate). On the other hand, if the prevailing mortgage rate at the time of relocation is lower than the contractual rate, i.e. $\kappa(\tau)< K$, the mortgage owner prepays the old mortgage and ``replaces'' it with a new mortgage, thus locking in the cheaper fixed rate, $\kappa(\tau)$, for the remaining tenor. The issuer observes a time $T$ exposure of $S(T,\kappa(\tau))$ for every $T\in[\tau,T^*]$ (the exposure remains $S(T,K)$ for $T\in[t_0,\tau)$). The different possible actions from the borrower side are illustrated in \Cref{fig: RelocationPortingPrepayment}, with a focus on the resulting exposures (taking the financial institution's point of view).

Mathematically, we write the issuer exposure at time $T\geq t_0$, when relocation at time $\tau$ is considered, as:
\begin{equation*}
\begin{aligned}
    Y(T,K)\equiv Y(T,K;\tau,\kappa(\tau))=&S(T,K)\1\{t\leq T\leq T^*<\tau\} \\
    +&S(T,K)\1\{t\leq T<\tau\leq T^*\} \\
    +&S(T,K)\1\{t\leq \tau\leq T \leq T^*\}\1\{\kappa(\tau)\geq K\}\\
    +&S(T,\kappa(\tau))\1\{t\leq\tau\leq T \leq T^*\}\1\{\kappa(\tau)< K\},
\end{aligned}
\end{equation*}
where $\1\{\cdot\}$ is the indicator function and the different parts of the right-hand side term correspond to a region in the scheme given in \Cref{fig: RelocationNoRelocation}.
To isolate the EPOR exposure (or payoff), we consider the difference between $S(T,K)$, the contractual mortgage exposure, and $Y(T,K)$. We get, for $T\geq t_0$:
\begin{align*}
    Z(T,K)\equiv Z(T,K;\tau,\kappa(\tau))&=S(T,K)-Y(T,K)\\
    &=A(T)\big(K-\kappa(\tau)\big)^+\1\{t\leq \tau\leq T \leq T^*\}.
\end{align*}
In other words, a relocation event generates exposure after the date of relocation (but before the end of the mortgage) only if the prevailing mortgage rate at the RT was lower than the contractual rate, i.e. $\kappa(\tau)<K$. In particular, we observe that the payoff at the relocation time $\tau$ is given by:
\begin{equation}
\begin{aligned}
\label{eq: EPOPayoff}
    Z(\tau,\kappa(\tau)) &=A(\tau)\big(K-\kappa(\tau)\big)^+\1\{t\leq\tau\leq T^*\}\\
    &=S(\tau,K)^+\1\{t\leq \tau\leq T^*\}.
\end{aligned}
\end{equation}

\subsection{Modeling of the relocation event and pricing of the prepayment option}
\label{ssec: ModelingRelocation}

We consider the time horizon $\T=[t_0,T^*]$ and $(\Omega,\F_{\infty},\P)$ the probability space supporting an interest rate process, $r:\Omega\times \T\rightarrow \R$, a housing market process, $h:\Omega\times \T\rightarrow \R$, and a unit exponential random variable, $E:\Omega\rightarrow \R^+$, i.e. $E\sim Exp(1)$ independent of $r$ and $h$.

The process $r$ represents the \emph{short rate}. A rich literature exists regarding possible model dynamics for $r$, see, e.g., \citep{brigo2006interest}. A common choice we will adhere to in this paper is the model proposed by \citep{hull1990pricing}, which is conveniently analytically tractable while also perfectly reproducing today's information about the prevailing market yield curve.\footnote{This model belongs, indeed, to the class of \citep*{heath1992bond} models fulfilling this requirement, crucial in real applications.} 
The process $h$ describes \emph{national HM activity}, defined as the annualized instantaneous rate of transactions observed in the Dutch housing market, i.e. the \emph{frequency} of transactions. Formally, we write:
\begin{equation}
\label{eq: HMActivity}
    h(t)=\lim_{\Delta t \rightarrow 0} \frac{1}{\Delta t }\frac{NoT(t,t+\Delta t)}{NoH(t)},
\end{equation}
where $NoT(t,t+\Delta t)$ and $NoH(t)$ are the number of transactions observed between time $t$ and $t+\Delta t$ and the total number of houses available, respectively. By definition, $h(t)$ is nonnegative, and the economic meaning of $h(t)$ is straightforward, indicating the (infinitesimal) number of transactions over a unit of time, normalized by the magnitude of the housing market. We will investigate the effect of different model choices for $h$.
The remaining building block is the relocation time, $\tau$. We model the (first) relocation event by a Cox process \citep{cox1959analysis}, following the insights delineated in \citep{lando1998cox} in the context of pricing defaultable securities. The relocation time is defined as:
\begin{equation}
\label{eq: RelocationTimeTau}
    \tau=\inf\bigg\{t:\int_{t_0}^t\lambda(h(s))\d s \geq E\bigg\},
\end{equation}
for some deterministic intensity function $\lambda:\R\rightarrow\R^+$. As observed in \citep{lando1998cox}, $\tau$ is ``the first jump time of a Cox process with intensity process $\lambda(h)$.'' We observe that the Cox process is a generalization of the Poisson process where a stochastic counterpart replaces the deterministic intensity. A stochastic intensity is crucial when we believe the event modeled depends on future realizations of some underlying variable whose behavior is uncertain and can only be predicted in a distributional sense.
\begin{rem}[Relocation time probability density function]
    As a standard result for the first jump time of an inhomogeneous Poisson process, see e.g. \citep{kalbfleisch2002statistical}, for $h:\R^+\rightarrow \R^+$, the probability density function (PDF) of the relocation time $\tau$ is given by the function:
    \begin{equation}
    \label{eq: PDFTau}
    \begin{aligned}
            f^h:\:&\R^+\longrightarrow\R^+,\\ &t\longmapsto \lambda(h(t)) \e^{-\int_{t_0}^t \lambda(h(s))\d s}.
    \end{aligned}
    \end{equation}
    In general, \eqref{eq: PDFTau} forms a family of densities described by the functional $f$ defined as:
    \begin{equation}
    \label{eq: FunctionalFTau}
    \begin{aligned}
        f:\:&\big(\R^+\rightarrow \R^+\big) \longrightarrow \big(\R^+\rightarrow\R^+\big),\\ &h\longmapsto f^h.
    \end{aligned}
    \end{equation}
    In particular, whenever $h$ is a stochastic process, $f^h$ represents a family of PDFs dependent on the realization $h(\omega,\cdot)$, for $\omega\in\Omega$. Hence, for a fixed time $t$, $f^h(t)$ is a well-defined random variable with all the canonical operators, such as the expectation and variance, in place.
\end{rem}

\Cref{fig: DensityDistributions} illustrates the ``distribution'' of densities for two special cases of $h$. In \Cref{fig: DensityDistributions}a, we assume $h$ ``jumps'' at time $t_0$ to a level drawn from a normal distribution and subsequently it is constant over time. Hence, each density has a different mass at time $t_0$ because of the multiplicative term $\lambda(h(t_0))$. Then, we observe a decay over time, at different speeds, caused by the exponential damping factor $\e^{-\int_{t_0}^t \lambda(h(s))\d s}=\e^{-\lambda(h(t_0))(t-t_0)}$, see \eqref{eq: PDFTau}. \Cref{fig: DensityDistributions}b illustrates the case when $h$ starts from a given value at $t_0$ and grows linearly to $h(T^*)$, here, $T^*=10$. $h(T^*)$ is randomly drawn by the same normal distribution used for $h(t_0)$ in \Cref{fig: DensityDistributions}a. The thin gray lines represent the ``paths'' followed by each realization of $h$ over time, whereas the thick gray line is the normal density of $h(t_0)$ and $h(T^*)$, respectively.

\begin{figure}[t]
    \centering
    \subfloat[\centering]{{\includegraphics[width=7.cm]{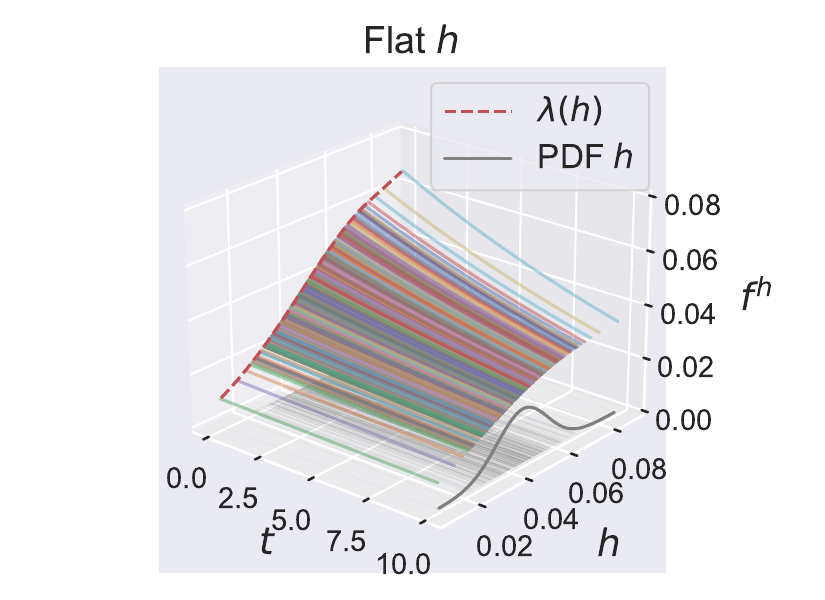} }}%
    ~\hspace{-.5cm}
    \subfloat[\centering]{{\includegraphics[width=7.cm]{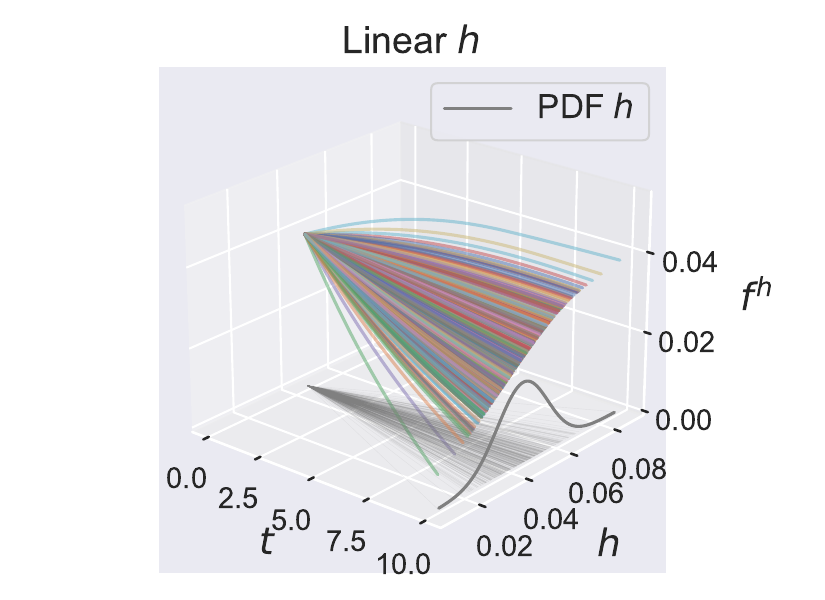} }}
    \caption{Realized relocation density, $f^h$, given different housing market activities, $h$. Left: flat $h$ in time, normally distributed at $t_0$. Right: linear $h$ in time, reaching a normally distributed final level, $h(T^*)$, for $T^*=10$.}
    \label{fig: DensityDistributions}%
\end{figure}

In addition to the hypothesis in \citep{lando1998cox}, we assume that the rate process $r$ and the housing market activity process $h$ are independent. Specifically, we assume the housing market activity $h(t)$, $t\in\T$ is independent of the current rate $r(t_0)$. Formally, we define the $\sigma$-algebras representing the amount of information for the three random quantities involved in the system as: 
\begin{equation*}
    \F_r(t)=\sigma(r(s):s\leq t),\quad \F_h(t)=\sigma(h(s):s\leq t),\quad \G(t)=\sigma(\1\{\tau\leq s\}:s\leq t).
\end{equation*}
The independence relationship stated above is formalized by:
\begin{ass}[Independence of rates and HM]
    $\F_r(t_0)\perp\F_h(t)$ for every $T\in\T$.
\end{ass}
\begin{rem}[Economic justification for independence]
\label{rem: RelocationRateIndependence}
    Empirical evidence shows that the activity in the housing market is not dependent on the prevailing interest rates. This is mainly explained by the correlation between interest rates and house prices. Low interest rates allow borrowers to apply for higher mortgages, but at the same time entail higher house prices and the other way around.  Hence, the effective ``cost of a loan'' for a specific house is ``close to invariant.'' 
\end{rem}

Following the standard derivative pricing theory, we consider the filtered measure space $(\Omega,\FF,\Q)$ where $\FF=(\F(t))_{t\in\T}$, with the increasing collection of $\sigma$-algebras, $\F(t)=\F_r(t)\vee\F_h(t)\vee\G(t)\subset \F_{\infty}$, each one representing the information available at time $t$ for the corresponding random quantity. $\Q$ is \emph{a} martingale measure, equivalent to $\P$, associated with the discount factor $D(t_0,T)=\e^{-\int_{t_0}^T r(s)\d s}$. The EPOR arbitrage-free price is the conditional expectation of the discounted exposure generated by prepayment due to relocation (see \eqref{eq: EPOPayoff}). The EPOR present value is defined as:
\begin{equation}
\label{eq: ValueEPORExpectation}
    V\equiv V(t_0) = \E^\Q\big[D(t_0,\tau)Z(\tau,\kappa(\tau))\big|\F(t_0)\big],
\end{equation}
where it is worth underlining that the swap rate $\kappa(t)$ is a measurable function of the short rate $r(t)$. Since the EPOR value is always computed at time $t_0$, we drop the explicit dependence on time in the following, using $V$ instead of $V(t_0)$. For notational convenience, we often use ``bar'' to indicate the conditional expectation w.r.t. the $\sigma$-algebra $\F_h(t_0)$, i.e. $\overline{X}\equiv \E^\Q[X|\F_h(t_0)]$ for some random variable $X\in\F_h(t_0)$. For instance, $f^h(t)$ in \eqref{eq: PDFTau} with \emph{stochastic} $h$ -- but \emph{fixed} $t$! -- is a well-defined random variable. Hence, we will write $\overline{f^h(t)}$ instead of $\E^\Q[ f^{h}(t)\big|\F_h(t_0)]$ to maintain a light notation. Observe also that the ``bar'' notation introduced here always refers to an expectation \emph{with respect to $\F_h(t_0)$}, the initial information regarding the HM activity risk factor. Hence, this notation should not be confused with the expectation in the pricing formula \eqref{eq: ValueEPORExpectation}, that is computed \emph{with respect to $\F(t_0)=\F_r(t_0)\vee\F_h(t_0)\vee\G(t_0)$}, the total initial information.

\begin{prop}[EPOR present value]
\label{prop: ValueEPORIntegral}
    Calling $\overline{f^h(T)}=\E^\Q\big[ f^{h}(T)\big|\F_h(t_0)\big]$, the EPOR present value given in \eqref{eq: ValueEPORExpectation} reads:
    \begin{equation}
    \label{eq: ValueEPORIntegral}
        V=\int_{t_0}^{T^*} C(T) \overline{f^h(T)} \d T,
    \end{equation}
    where $C(T)\equiv C(T;K;\T_{\tt{p}},N)$ is the price of a European swaption maturing at time $T$ written on a receiver (amortizing) swap with fixed rate $K$, and tenor described by (the remaining part of) $\T_{\tt{p}}$ and $N(t)$, $t\leq T$. $\overline{f^h(T)}$ is the average density, at time $T\in\T=[t_0,T^*]$, over all the possible HM activity realizations, $h$.
\end{prop}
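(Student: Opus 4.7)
The plan is to start from the definition in \eqref{eq: ValueEPORExpectation} and use iterated conditioning to integrate out the randomness of $\tau$ against its (path-conditional) density $f^h$, thereby reducing $V$ to an integral of a swaption price weighted by the averaged relocation density. Substituting the explicit payoff \eqref{eq: EPOPayoff}, the integrand under the expectation becomes $D(t_0,\tau)A(\tau)(K-\kappa(\tau))^+\1\{t_0\leq\tau\leq T^*\}$, in which the $\tau$-dependence enters only through scalar evaluations of functions of the rate process and through an indicator on $[t_0,T^*]$.

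First I would enlarge the conditioning $\sigma$-algebra to $\F_r(\infty)\vee\F_h(\infty)$. Given the entire paths of $r$ and $h$, the only residual randomness is the exponential threshold $E$, which by hypothesis is independent of both $r$ and $h$. Hence, conditionally on $\F_r(\infty)\vee\F_h(\infty)$, the relocation time $\tau$ in \eqref{eq: RelocationTimeTau} admits the density $f^h$ displayed in \eqref{eq: PDFTau}, and the inner expectation evaluates explicitly to
$$
\int_{t_0}^{T^*} D(t_0,T)\,A(T)\,(K-\kappa(T))^+\,f^h(T)\,\d T.
$$
By the tower property and Fubini (justified by integrability of $f^h$ and finiteness of the swaption payoff on the bounded horizon $[t_0,T^*]$), I then arrive at
$$
V=\int_{t_0}^{T^*}\E^\Q\!\big[D(t_0,T)\,A(T)\,(K-\kappa(T))^+\,f^h(T)\,\big|\,\F(t_0)\big]\,\d T.
$$

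Next, I would invoke the independence between the rate and housing market information to factor the integrand as a product of two conditional expectations: the rate-driven term $D(t_0,T)A(T)(K-\kappa(T))^+$ is independent of the HM-driven random variable $f^h(T)$ under $\Q$, and the conditioning on $\F(t_0)$ collapses to $\F_r(t_0)$ and $\F_h(t_0)$, respectively, for the two factors. The first conditional expectation is, by construction, the time-$t_0$ arbitrage-free price of a receiver European swaption with maturity $T$, strike $K$, and the remaining amortizing tenor described by $(\T_{\tt{p}},N)$, that is, exactly $C(T)$; the second is $\overline{f^h(T)}$ by the definition of the bar notation. Combining the two yields \eqref{eq: ValueEPORIntegral}.

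The main subtlety is to keep the different independence arguments organised: one must distinguish (i) the independence of $E$ from $(r,h)$, which is what grants $\tau$ a clean conditional density $f^h$ once the $h$-path is revealed, from (ii) the independence of the processes $r$ and $h$, which is what allows the swaption price and the averaged density to decouple inside the integral. A minor but worthwhile sanity check is that the indicator $\1\{t_0\leq\tau\leq T^*\}$ in the payoff is exactly what restricts the integration domain to $[t_0,T^*]$ after conditioning, so no further truncation is required.
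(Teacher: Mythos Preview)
Your proposal is correct and follows essentially the same route as the paper: integrate out $\tau$ via its path-conditional density $f^h$, swap integral and expectation by Fubini, and then factor the integrand into the swaption price $C(T)$ and $\overline{f^h(T)}$ using the independence of $r$ and $h$. The only presentational difference is that the paper packages your first two steps as a direct citation of \citep[Proposition~3.1]{lando1998cox}, obtaining $V=\int_{t_0}^{T^*}\E^\Q[Z(T,\kappa(T))f^h(T)D(t_0,T)\,|\,\F_{r,h}(t_0)]\,\d T$ in one shot, whereas you re-derive this formula explicitly by conditioning on $\F_r(\infty)\vee\F_h(\infty)$ and using the independence of $E$ from $(r,h)$; your version is more self-contained, the paper's more concise.
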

\begin{proof}
    Observe first that $Z(t,\kappa(t))$ in \eqref{eq: EPOPayoff} is an $\F_r(t)$-adapted process, hence it is also adapted to the filtration generated by the increasing sequence $\F_{r,h}(t)=\F_r(t)\vee\F_h(t)$. Using \citep[][Proposition 3.1, eq. (3.3)]{lando1998cox} and Fubini's theorem, we get:
    \begin{equation*}
        V = \int_{t_0}^{T^*}\underbrace{\E^\Q\Big[Z(T,\kappa(T))\lambda(h(T))\e^{-\int_{t_0}^T (r(s)+\lambda(h(s)))\d s}\Big|\F_{r,h}(t_0)\Big]}_{(*)}\d T.
    \end{equation*}
    Since $\F_{r,h}(t)\subset \F_r(t)\vee\F_h(T)$, by the tower property of conditional expectations, we have:
    \begin{align*}
        (*)&=\E^\Q\bigg[\E^\Q\Big[Z(T,\kappa(T))\lambda(h(T))\e^{-\int_{t_0}^T (r(s)+\lambda(h(s)))\d s}\Big|\F_r(t_0)\vee\F_h(T)\Big]\bigg|\F_{r,h}(t_0)\bigg]\\
        &=\E^\Q\bigg[\E^\Q\Big[Z(T,\kappa(T))\e^{-\int_{t_0}^T r(s)\d s}\Big|\F_r(t_0)\Big]\lambda(h(T))\e^{-\int_{t_0}^T \lambda(h(s))\d s}\bigg|\F_{r,h}(t_0)\bigg]\\
        &=\E^\Q\Big[C(T)\lambda(h(T))\e^{-\int_{t_0}^T \lambda(h(s))\d s}\Big|\F_{r,h}(t_0)\Big]\\
        &=C(T) \E^\Q\big[ f^{h}(T)\big|\F_h(t_0)\big],
    \end{align*}
    where in the second equality we used that $f^h(T)=\lambda(h(T))\e^{-\int_{t_0}^T \lambda(h(s))\d s}\in \F_r(t_0)\vee\F_h(T)$ and that $\F_r(t_0)\perp\F_h(T)$; in the third equality we observed that the inner conditional expectation is the price of a suitable European swaption, $C(T)$; in the last equality we used that $\F_r(t_0)\perp \F_h(t_0)$. This concludes the proof.
\end{proof}

\begin{cor}[EPOR distribution w.r.t. the HM]
\label{cor: EPORDIstribution}
    The EPOR present value can be rephrased as:
    \begin{equation}
    \label{eq: EPORValueAvgDistribution}
        V=\overline{V_h}\equiv \E^\Q\big[V_h\big|\F_h(t_0)\big],
    \end{equation}
    where the random variable $V_h\equiv V_h(t_0)$ is defined as:
    \begin{equation}
        \label{eq: EPORValueDistribution}
        \begin{aligned}
    V_h:\:&\Omega\longrightarrow\R,\\ &\omega\longmapsto \int_{t_0}^{T^*} C(T) f^{h(\omega,\cdot)}(T) \d T,
    \end{aligned}
    \end{equation}
    for $f^h$ given in \Cref{eq: PDFTau}.
\end{cor}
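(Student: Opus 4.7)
The plan is to show the identity $V=\overline{V_h}$ by applying Fubini's theorem to interchange the time integral and the conditional expectation, and then invoking \Cref{prop: ValueEPORIntegral}. The result is essentially a reformulation of the pricing formula \eqref{eq: ValueEPORIntegral} in which the swaption prices $C(T)$ are weighted by a \emph{random} realized relocation density $f^{h(\omega,\cdot)}$, and the randomness is then averaged out.

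First, I would start from \eqref{eq: EPORValueDistribution} and write
\begin{equation*}
    \overline{V_h} \;=\; \E^\Q\!\left[\int_{t_0}^{T^*} C(T)\,f^{h}(T)\,\d T\,\bigg|\,\F_h(t_0)\right].
\end{equation*}
Since $C(T)\geq 0$ (it is the arbitrage-free price of a European swaption) and $f^h(T)\geq 0$ (it is a density), the integrand is nonnegative and Tonelli--Fubini applies, yielding
\begin{equation*}
    \overline{V_h} \;=\; \int_{t_0}^{T^*} \E^\Q\!\left[C(T)\,f^h(T)\,\big|\,\F_h(t_0)\right]\d T.
\end{equation*}

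Next, I would observe that $C(T)$ is $\F_r(t_0)$-measurable (it is the price at $t_0$ of a European swaption computed as a function of today's short rate and yield-curve information), while $f^h(T)$ is $\F_h(T)$-measurable. By \Cref{ass: EquivalenceSwapMortgage}'s accompanying independence hypothesis $\F_r(t_0)\perp \F_h(t)$ and the fact that the pricing convention fixes today's information, $C(T)$ can be treated as constant with respect to the conditional expectation $\E^\Q[\,\cdot\,|\F_h(t_0)]$, which allows us to factor it out:
\begin{equation*}
    \overline{V_h} \;=\; \int_{t_0}^{T^*} C(T)\,\E^\Q\!\left[f^h(T)\,\big|\,\F_h(t_0)\right]\d T \;=\; \int_{t_0}^{T^*} C(T)\,\overline{f^h(T)}\,\d T.
\end{equation*}

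Finally, the right-hand side is exactly the expression for $V$ given by \Cref{prop: ValueEPORIntegral}, which closes the argument. I do not expect any genuinely hard step here; the only delicate point is the justification for pulling $C(T)$ out of the conditional expectation, which rests on the independence of the short-rate and housing-market filtrations. If one wished to avoid invoking that factorization step, an alternative route would be to redo the derivation in the proof of \Cref{prop: ValueEPORIntegral} but skipping the interchange of $\E^\Q[\cdot|\F_h(t_0)]$ and the integral, thereby obtaining $V=\E^\Q[V_h|\F_h(t_0)]$ directly.
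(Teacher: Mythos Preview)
Your proposal is correct and follows essentially the same route as the paper, which dispatches the corollary in one line (``by definition of $V_h$, using the linearity property of the operators involved''); you have simply spelled out that linearity argument via Fubini and the identification with \eqref{eq: ValueEPORIntegral}. One minor remark: your care in justifying that $C(T)$ can be pulled out of $\E^\Q[\,\cdot\,|\F_h(t_0)]$ is slightly more than needed, since in the paper's setup $C(T)$ is the time-$t_0$ swaption price and is therefore a deterministic number once we sit at $t_0$, so it factors out trivially without invoking the independence assumption.
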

\begin{proof}
    The result holds by definition of $V_h$, using the linearity property of the operators involved.
\end{proof}

Before moving to the next part, we list a few remarks regarding the two results above and their economic interpretation and use.
\begin{itemize}
    \item \Cref{eq: ValueEPORIntegral} in \Cref{prop: ValueEPORIntegral} provides a compact pricing formula for the EPO in case of relocation. Indeed, when a (semi)analytic pricing formula for European swaptions, $C(T)$, exists,\footnote{E.g., under the Hull-White model, vanilla and amortizing European swaptions can be priced semianalytically using Jamshidian trick. This only requires the numerical computation of the root of a suitable equation.} formula \eqref{eq: ValueEPORIntegral} requires the computation of the inner expectation (only w.r.t. process $h$) and the numerical evaluation of the outer deterministic integral. For special choices of $h$, the inner expectation allows for analytical solutions or accurate approximated solutions. In general, the two problems are modular and can be solved separately, making the evaluation computationally efficient.
    \item The average PDF for the relocation time, i.e. $\overline{f^h(T)}=\E^\Q\big[ f^{h}(T)\big|\F_h(t_0)\big]$, provides the natural weights for the relevant European swaption maturities in the representation of the EPOR value. Replication strategies based on market instruments can be developed starting from these weights.
    \item \Cref{eq: EPORValueAvgDistribution} in \Cref{cor: EPORDIstribution} states that the EPOR value is the average value considering all possible housing market scenarios. However, given the non-hedgeable nature of the HM activity, the synthesis provided by the average value is of arguable use. Indeed, since Delta-hedging is not possible, we might be interested in investigating the EPOR value distribution for different realizations of the HM, as defined in \eqref{eq: EPORValueDistribution}. In a risk management framework, the information regarding the EPOR distribution is a valuable input to develop hedging strategies based on scenario analysis (sometimes also called \emph{actuarial hedging}).
\end{itemize}

\subsection{Nonlinear adjustment and approximation of the EPOR value}
\label{ssec: NonlinearAdjustment}

A consequence of \Cref{prop: ValueEPORIntegral} is that, given the non-linearity (in $h$) of $f^h$, the \emph{true} price of the EPOR differs from the price obtained assuming the average HM realizes. This is entailed by the fact that, in general, the following holds true:
\begin{equation}
\label{eq: AvgDensityInequality}
    \overline{f^h(T)}\neq f^{\Bar{h}}(T),\qquad T\geq t_0,
\end{equation}
where $\overline{f^h(T)}$ is defined in \Cref{prop: ValueEPORIntegral}, and $\Bar{h}$ is defined -- with some abuse of notation -- as the function mapping $t$ into $\overline{h(t)}\equiv\E^\Q[h(t)|\F_h(t_0)]$, $t\geq t_0$.
We refer to the mismatch in the EPOR value resulting from substituting the right-hand side of \eqref{eq: AvgDensityInequality} in \eqref{eq: ValueEPORIntegral} as \emph{nonlinear adjustment}\footnote{In literature, it is more common to find the term \emph{convexity adjustment} and it always has the same sign. In our framework, however, this quantity may be positive or negative depending on the average housing market level.} and we show in the following section that such adjustment depends on the interplay between the (co)variance structure of the underlying stochastic HM and the Hessian operator of $f^h$ in $\Bar{h}$.

\begin{cor}[Nonlinear adjustment]
\label{cor: ConvexityAdjustment}
    Let us introduce the notation:
\begin{equation}
\begin{array}{cc}
    \begin{aligned}
         \Bar{h}:\:&\R^+\longrightarrow \R^+,\\
         & t\longmapsto\overline{h(t)},
     \end{aligned}
    \quad&\quad
    \begin{aligned}
         \Delta h:\:&\Omega\times\R^+\longrightarrow \R^+,\\
         &(\omega,t)\longmapsto h(\omega,t)-\Bar{h}(t),
     \end{aligned}
\end{array}
\end{equation}
    and let $V_{\Bar{h}}\equiv V_{\Bar{h}}(t_0)$ be defined as:
    \begin{equation*}
         V_{\Bar{h}}=\int_{t_0}^{T^*} C(T) f^{\Bar{h}}(T)\d T.
     \end{equation*}
    Then, pricing formula \eqref{eq: ValueEPORIntegral} in \Cref{prop: ValueEPORIntegral} allows for the representation:
    \begin{align}
    \label{eq: EPORValueApproximationCOnvexityAdj}
        V&= V_{\Bar{h}}+ \nu_{\Bar{h}} + \O\Big(\overline{||\Delta h||_{T^*}^3}\Big),\\
        \label{eq: NonlinearAdjustmentTerm}
        \nu_{\Bar{h}}&\equiv \nu_{\Bar{h}}(t_0) = \frac{1}{2}\int_{t_0}^{T^*}C(T)\overline{\langle \Delta h, \H^{\Bar{h}}_T\rangle_T}\d T.
    \end{align}
    where $\overline{||\Delta h||_{T}^3}\equiv\E^\Q\big[||\Delta h||_{T}^3\big|\F_h(t_0)\big]$ and $\overline{\langle \Delta h, \H^{\Bar{h}}_T\rangle_T}\equiv\E^\Q\big[\langle \Delta h, \H^{\Bar{h}}_T\rangle_T\big|\F_h(t_0)\big]$, with $||g_1||_T=\big(\int_{t_0}^T g_1(t)^2\d t\big)^{1/2}$ and $\langle g_1,g_2\rangle_T=\int_{t_0}^T\int_{t_0}^T g_2(s,t)g_1(s)g_1(t)\d s\d t$ for suitable, sufficiently integrable functions $g_{1}:\R^+\rightarrow \R$ and $g_2:\R^+\times\R^+\rightarrow \R$.
    For every fixed $T\in\T$, $\H^{\Bar{h}}_T$ is the Hessian of $f^h$ evaluated at the trajectory $\Bar{h}$, i.e.:
     \begin{equation*}
     \begin{aligned}
         \H^{\Bar{h}}_T:\:&\R^+\times\R^+\longrightarrow \R,\\
         &(s,t)\longmapsto \frac{\delta^2 f^h(T)}{\delta h(s) \delta h(t)}\bigg|_{h=\Bar{h}},
        \end{aligned}
    \end{equation*}
     where the functional derivative, $\frac{\delta^2 f^h(T)}{\delta h(s) \delta h(t)}$, is defined as:
     \begin{equation}
     \label{eq: HessianConvexityAdjustment}
     \begin{aligned}
         \frac{\delta^2 f^h(T)}{\delta h(s) \delta h(t)} &= f^h(T)\bigg(\Big(1-\Lambda(s,T)-\Lambda(t,T)\Big)\lambda'(h(s))\lambda'(h(t))\\
         &\qquad\qquad\qquad \quad+\Big(\Lambda(t,T) - 1\Big)\delta(s-t)\lambda''(h(s))\bigg),
     \end{aligned}
     \end{equation}
     with:
     \begin{equation*}
         \Lambda(v,T)=\frac{\delta(T-v)}{\lambda(h(T))},\qquad \lambda'\equiv\frac{\d\lambda}{\d h},\qquad \lambda''\equiv\frac{\d^2\lambda}{\d h^2},
     \end{equation*}
     and $\delta$ is the Dirac delta distribution, i.e. $\delta(x)=0$ for every $x\neq 0$ and $\int_{\R}\delta(x)\d x=1$.
\end{cor}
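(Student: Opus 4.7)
The plan is to reduce the claim to a functional Taylor expansion of the path-functional $h\mapsto f^h(T)$ around the mean trajectory $\bar h$, and then take the conditional expectation $\E^\Q[\cdot\mid\F_h(t_0)]$ term by term. Starting from $V=\int_{t_0}^{T^*}C(T)\,\overline{f^h(T)}\,\d T$ of \Cref{prop: ValueEPORIntegral}, I view $f^h(T)$ for fixed $T$ as a Volterra functional of the path $\{h(t)\}_{t\in[t_0,T]}$ and expand to second order around $\bar h$:
\begin{equation*}
f^h(T)=f^{\bar h}(T)+\int_{t_0}^{T}\!\frac{\delta f^h(T)}{\delta h(t)}\bigg|_{\bar h}\!\Delta h(t)\,\d t+\frac{1}{2}\!\int_{t_0}^{T}\!\!\int_{t_0}^{T}\!\frac{\delta^2 f^h(T)}{\delta h(s)\delta h(t)}\bigg|_{\bar h}\!\Delta h(s)\Delta h(t)\,\d s\,\d t+R_3(T),
\end{equation*}
with cubic-order remainder $R_3(T)=\O(\|\Delta h\|_T^3)$ uniform in $T\le T^*$, under $C^3$ regularity of $\lambda$.

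Applying $\E^\Q[\cdot\mid\F_h(t_0)]$ term by term: the zeroth-order term is $\F_h(t_0)$-measurable and produces $V_{\bar h}$ after integration against $C(T)$; the first-order term vanishes since $\overline{\Delta h(t)}=0$ by construction of $\bar h$; the quadratic term, after a Fubini swap of the outer expectation with the $(s,t)$-integration, becomes exactly $\frac{1}{2}\overline{\langle\Delta h,\H^{\bar h}_T\rangle_T}$ with the bracket as defined in the statement; finally $\overline{R_3(T)}=\O(\overline{\|\Delta h\|_T^3})$. Integrating in $T$ against $C(T)$ and invoking Fubini once more yields \eqref{eq: EPORValueApproximationCOnvexityAdj}--\eqref{eq: NonlinearAdjustmentTerm}.

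The Hessian formula \eqref{eq: HessianConvexityAdjustment} is obtained by two direct functional differentiations of $f^h(T)=\lambda(h(T))\exp\bigl(-\int_{t_0}^{T}\lambda(h(s))\,\d s\bigr)$. Using $\delta\lambda(h(T))/\delta h(t)=\lambda'(h(T))\delta(T-t)$ and $\delta/\delta h(t)\bigl[-\int_{t_0}^{T}\lambda(h(s))\,\d s\bigr]=-\lambda'(h(t))\1\{t\le T\}$, the first derivative reads
\begin{equation*}
\frac{\delta f^h(T)}{\delta h(t)}=f^h(T)\bigl(\Lambda(t,T)\lambda'(h(T))-\lambda'(h(t))\bigr).
\end{equation*}
A second differentiation in $h(s)$ produces, via the product rule, three contributions: one from differentiating the prefactor $f^h(T)$ (reintroducing the bracket evaluated at $s$); a distributional term $\lambda''(h(t))\delta(s-t)$ from $\lambda'(h(t))$; and a term $\lambda''(h(T))\delta(T-s)$ from $\lambda'(h(T))$. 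Collecting them and simplifying with $\Lambda(v,T)=\delta(T-v)/\lambda(h(T))$ recovers \eqref{eq: HessianConvexityAdjustment}.

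The main obstacle I anticipate is the clean bookkeeping of the Dirac-$\delta$ terms when assembling the Hessian: the symbolically ill-defined products $\delta(T-s)\delta(T-t)$ must cancel between the different product-rule contributions, so the computation is most safely performed by first regularizing $\delta$ (e.g. through a mollifier) and passing to the limit, or by testing against smooth trajectories. A secondary technical point is the rigorous justification of the interchange of conditional expectation with the functional Taylor remainder, which requires $\overline{\|\Delta h\|_{T^*}^3}<\infty$ together with $C^3$-regularity of $\lambda$ so that $R_3(T)$ is controlled pointwise by a $\|\Delta h\|_T^3$-bound with an integrable prefactor.
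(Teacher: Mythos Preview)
Your proposal is correct and follows exactly the paper's route: a second-order functional Taylor expansion of $f^h(T)$ around $\bar h$, substitution into \eqref{eq: ValueEPORIntegral}, use of $\overline{\Delta h(t)}=0$ to kill the linear term, and direct computation of the Hessian. Your anticipated obstacle---the $\delta(T-s)\delta(T-t)$ products---indeed arises and cancels just as you suspect: the $\Lambda(s,T)\Lambda(t,T)$ contribution from differentiating the prefactor $f^h(T)$ is offset by the one coming from $\delta\Lambda(t,T)/\delta h(s)$, leaving precisely the expression in \eqref{eq: HessianConvexityAdjustment}.
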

\begin{proof}
    The proof is given in \Cref{app: ProofConvexityAdjustment}.
\end{proof}
Intuitively, the nonlinear adjustment, $\nu_{\Bar{h}}$, acts as a correction term that captures most of the mismatch between $V$ -- the EPOR value resulting from a stochastic model for the HM activity -- and $V_{\Bar{h}}$ -- the value from a deterministic model based on the average HM. This is verified numerically in a simplified setting -- see \Cref{exm: FlatStochasticHM} below, for more details -- and shown in \Cref{fig: NonlinearAdj}a. The dotted-dashed blue line representing $V$ closely follows the dashed cyan line representing the sum of $V_{\Bar{h}}$ and $\nu_{\Bar{h}}$.  For reference, we represent the baseline value $V_{\Bar{h}}$ as a solid red line. The EPOR mismatch in the figure is plotted against the mortgage rate level, $K$, as a percentage of $V_{\Bar{h}}$, i.e. on the vertical axis we plot the quantity $(1-y/V_{\Bar{h}})\times 100$, for $y$ being either $V$, $V_{\Bar{h}}$ or $V_{\Bar{h}}+\nu_{\Bar{h}}$.

\begin{figure}[t]
    \centering
    \subfloat[\centering]{{\includegraphics[width=7.cm]{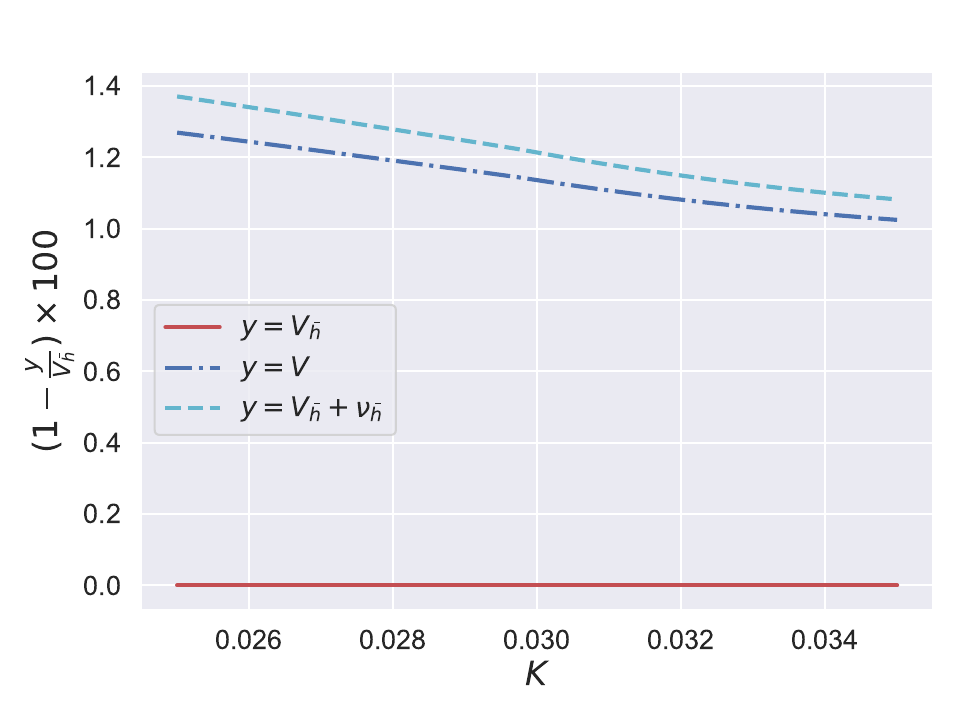} }}%
    ~\hspace{-.5cm}
    \subfloat[\centering]{{\includegraphics[width=7.cm]{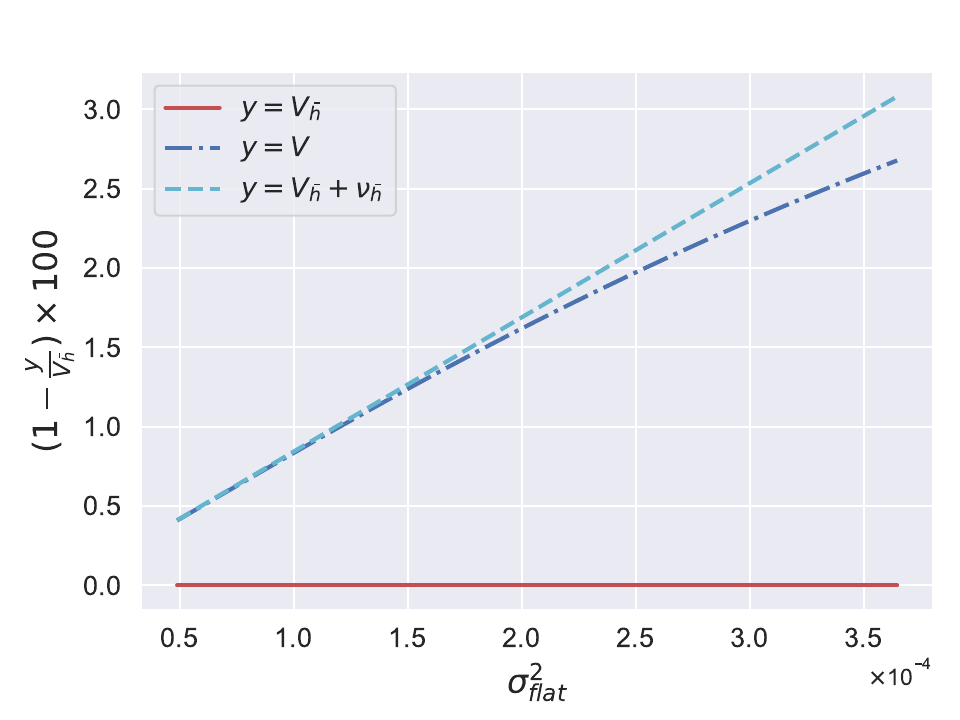} }}
    \caption{a) Relative impact of the stochastic model for different fixed rates. The dotted-dashed blue, the solid red, and the dashed cyan lines represent $(1-y/V_{\Bar{h}})\times 100$ for $y=V,V_{\Bar{h}},V_{\Bar{h}}+\nu_{\Bar{h}}$, respectively. b) Effect of variance on the nonlinear adjustment, in the framework described in \Cref{exm: FlatStochasticHM}.}
    \label{fig: NonlinearAdj}%
\end{figure}

The nonlinear adjustment, $\nu_{\Bar{h}}$, in \eqref{eq: NonlinearAdjustmentTerm} is driven by two main quantities, the (co)variance of the underlying factor $h$ and the Hessian $\H^{\Bar{h}}_T$, for $T\in\T$, evaluated at the average trajectory $\Bar{h}$, see \eqref{eq: HessianConvexityAdjustment} in \Cref{cor: ConvexityAdjustment}.
\begin{rem}[Interplay between (co)variance and Hessian]
Writing explicitly the argument of the expectation in \eqref{eq: NonlinearAdjustmentTerm}, we get:
\begin{align*}
    \langle \Delta h, \H^{\Bar{h}}_T\rangle_T&=\int_{t_0}^T\int_{t_0}^T \H^{\Bar{h}}_T(s,t) \Delta h(s)\Delta h(t)\d s \d t.
\end{align*}
Hence, by taking the expectation and applying Fubini's theorem, the nonlinear adjustment is a functional of the covariance of the housing market process $h$, weighted by the Hessian of the density.
In particular, we obtain the term:
\begin{equation}
\label{eq: ExpectationDoubleIntegral}
    \overline{\langle \Delta h, \H^{\Bar{h}}_T\rangle_T}=\int_{t_0}^T\int_{t_0}^T \H^{\Bar{h}}_T(s,t) \overline{\Delta h(s) \Delta h(t)} \d s\d t,
\end{equation}
with $\overline{\Delta h(s) \Delta h(t)}\equiv\E^\Q\big[\Delta h(s) \Delta h(t)\big|\F_h(t_0)\big]$ the covariance between $h(s)$ and $h(t)$.
\end{rem}
For the sake of intuition, in the following example, we assume the HM activity is stochastic at time $t_0$ and then kept flat over time, as illustrated in \Cref{fig: DensityDistributions}a.
\begin{exm}
\label{exm: FlatStochasticHM}
     Let us assume that, for scenario $\omega\in\Omega$, $h(\omega,t)=h_{flat}(\omega)$ for every $t\geq t_0$; $h_{flat}:\Omega \rightarrow \R^+$ is a suitable random variable describing the uncertainty of the future HM activity level at time $t_0$, and in particular $h_{flat}\in \F(t_0)$. Under this assumption, the covariance $\overline{\Delta h(s) \Delta h(t)}$ is equal to the variance $\sigma_{flat}^2$ of $h_{flat}$, for any $s,t$. Hence, \eqref{eq: ExpectationDoubleIntegral} reads:
     \begin{equation}
     \label{eq: ExpectationDoubleIntegralExample}
         \overline{\langle \Delta h, \H^{\Bar{h}}_T\rangle_T}=\sigma_{flat}^2\int_{t_0}^T\int_{t_0}^T \H^{\Bar{h}}_T(s,t) \d s\d t.
     \end{equation}
     The nonlinear adjustment \eqref{eq: NonlinearAdjustmentTerm} is, in this case, proportional to the variance, $\sigma_{flat}^2$, of the HM random variable $h_{flat}$. We show such an effect in \Cref{fig: NonlinearAdj}b, where the nonlinear adjustment is depicted as a straight dashed cyan line. Furthermore, given that $C(T)$ in \eqref{eq: NonlinearAdjustmentTerm} is nonnegative, we notice that the Hessian, $\H^{\Bar{h}}_T$, determines the sign of the correction $\nu_{\Bar{h}}$ of \Cref{cor: ConvexityAdjustment}. 
\end{exm}

\begin{figure}[b]
    \centering
    \subfloat[\centering]{{\includegraphics[width=7.cm]{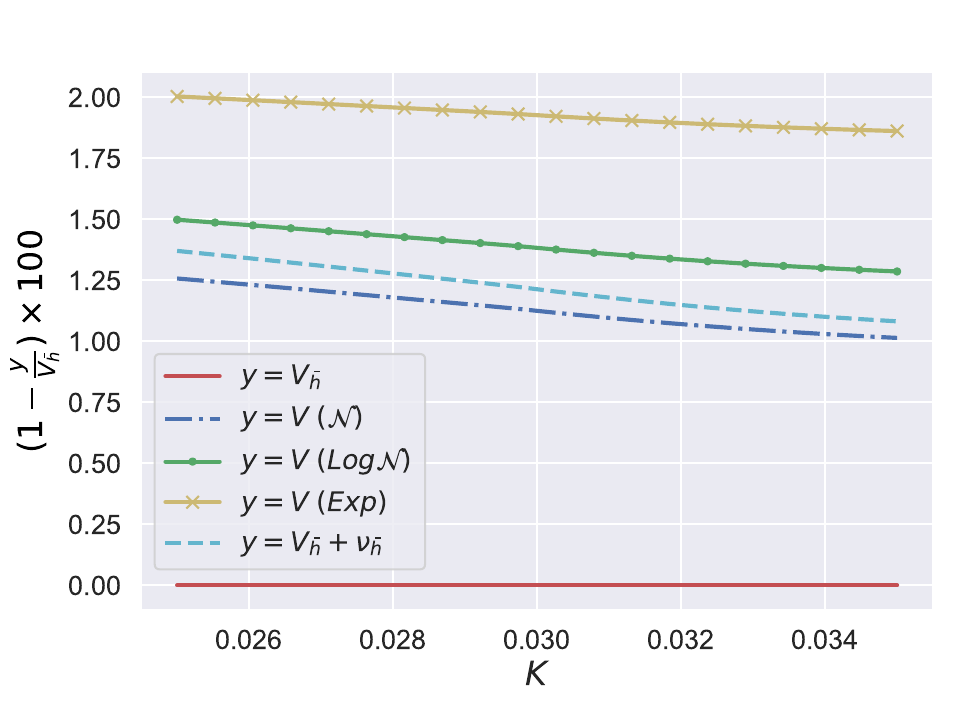} }}%
    ~\hspace{-.5cm}
    \subfloat[\centering]{{\includegraphics[width=7.cm]{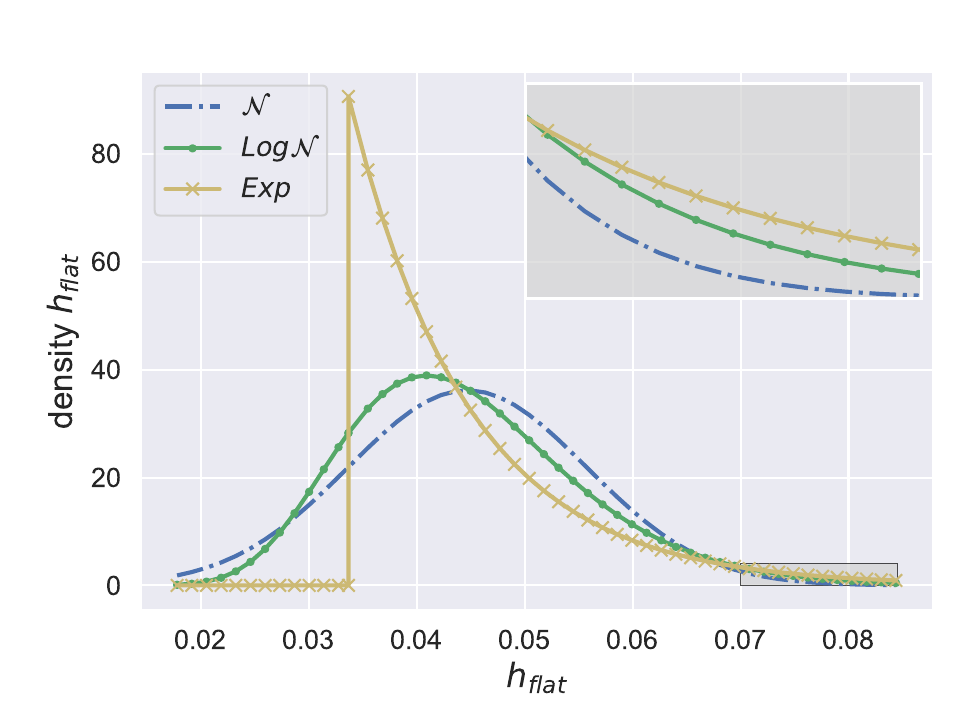} }}
    \caption{a) $V_{\Bar{h}}-V$ for normal ($\N$), log-normal ($log\N$) and exponential ($Exp$) $h_{flat}$ compared with (minus) the non-linear adjustment $\nu_{\Bar{h}}$. b) Densities of the three distributions above with a focus on their right tails.}
    \label{fig: NonlinearAdjDifferentDensities}%
\end{figure}

\begin{rem}[Accuracy deterioration of the non-linear adjustment]
    In the setting of \Cref{exm: FlatStochasticHM}, considering a normal random variable for $h_{flat}$, we check the accuracy of the nonlinear adjustment in capturing the mismatch between the deterministic and the stochastic price deteriorates for a stochastic model with higher variance (see the dash-dotted blue line in \Cref{fig: NonlinearAdj}b). This effect is a consequence of the Taylor expansion-based price approximation in \Cref{cor: ConvexityAdjustment} being less accurate when $h$ deviates from its average. In general, the deterioration is more pronounced when we consider a random variable $h_{flat}$ with more mass in the distribution tails (hence, a more pronounced $O(\overline{||\Delta h||_{T^*}^3})$ in \eqref{eq: EPORValueApproximationCOnvexityAdj}). In \Cref{fig: NonlinearAdjDifferentDensities}, we illustrate this fact using as examples the normal, log-normal, and (shifted) exponential distributions. \Cref{fig: NonlinearAdjDifferentDensities}a shows that the nonlinear adjustment (dashed cyan line) captures the effect of a normal $h_{flat}$ (dash-dotted blue line), while it performs worse when $h_{flat}$ is log-normal (solid-dotted green line) or exponential (solid-crossed yellow line). In \Cref{fig: NonlinearAdjDifferentDensities}b, we show the densities of the three distributions, focusing on the mass in their right tails.
\end{rem}

Because of the Dirac deltas appearing in \eqref{eq: HessianConvexityAdjustment}, for any $T\in\T$, the Hessian $\H_T^{\Bar{h}}$ is concentrated mainly in $\H_T^{\Bar{h}}(T,T)$, with a second order effect in $\H_T^{\Bar{h}}(T,t)$, $t\in[t_0,T)$, and $\H_T^{\Bar{h}}(s,T)$, $s\in[t_0,T)$. To illustrate such an effect, in \Cref{fig: HessianHeatmapAndIntegral}a, we report a discretized version of the Hessian for $T=10$ and assuming $h$ is interpolated linearly between a set of predefined spine points. The discretization details are given in \Cref{app: DiscretizedHessian}. In \Cref{fig: HessianHeatmapAndIntegral}b, the double integral in \eqref{eq: ExpectationDoubleIntegralExample} is depicted as a function of $T\in\T$, i.e. $T\mapsto \int_{t_0}^T\int_{t_0}^T \H^{\Bar{h}}_T(s,t) \d s\d t$. In \Cref{exm: FlatStochasticHM}, the sign of the double integral determines the sign of the non-linear adjustment, being both $\sigma^2_{flat}$ in \eqref{eq: ExpectationDoubleIntegralExample} and $C(T)$ in \eqref{eq: NonlinearAdjustmentTerm} strictly positive.

\begin{figure}[t]
    \centering
    \subfloat[\centering]{{\includegraphics[width=7.cm]{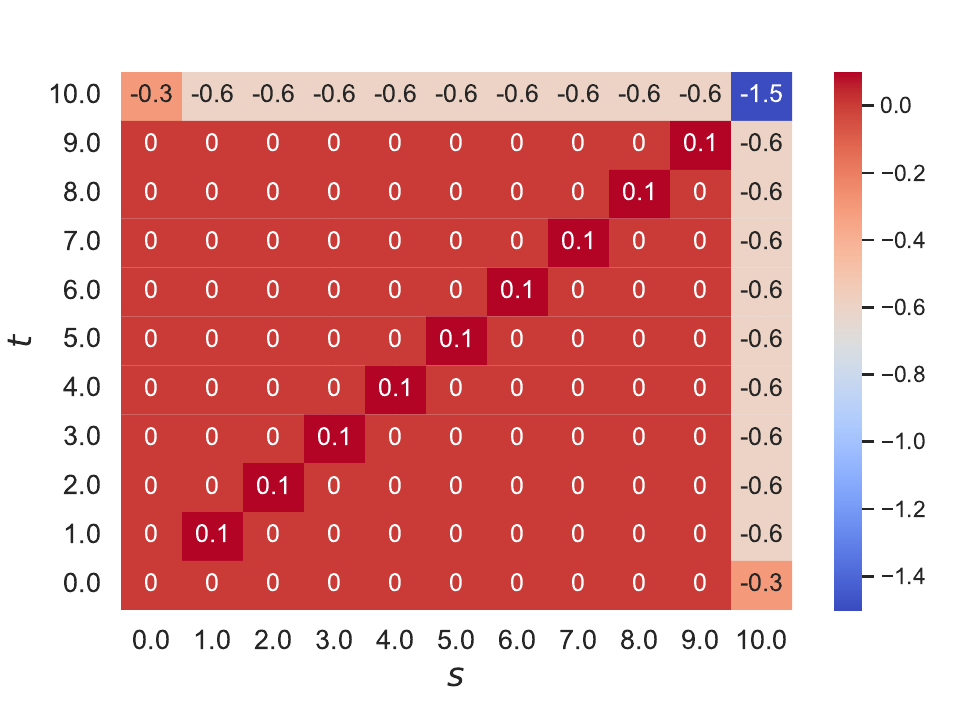} }}%
    ~\hspace{-.5cm}
    \subfloat[\centering]{{\includegraphics[width=7.cm]{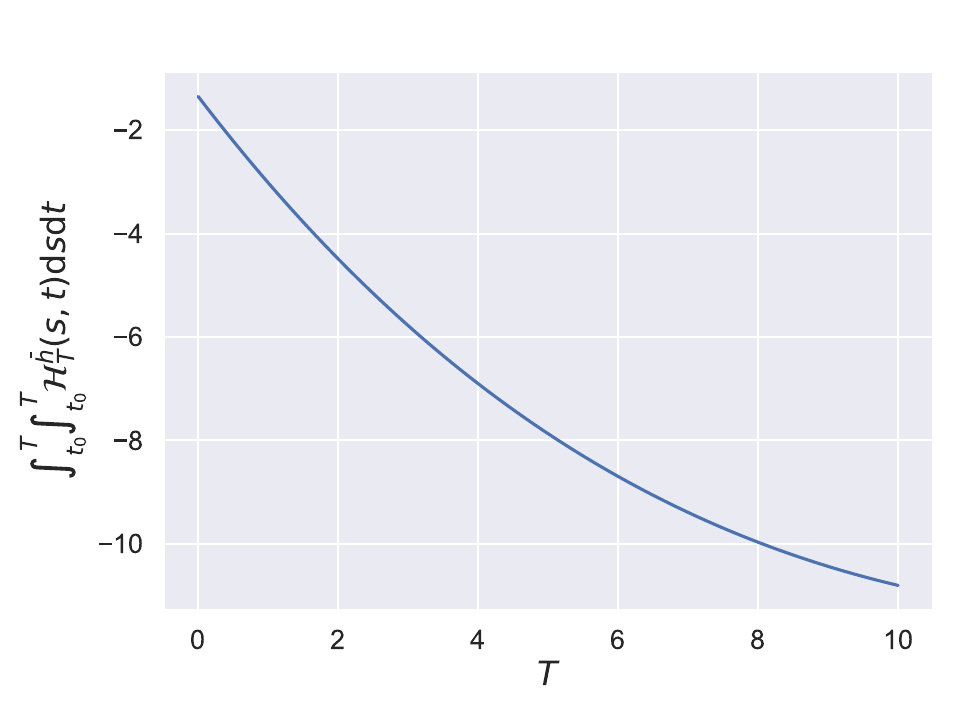} }}
    \caption{Left: discretized Hessian function. Right: double integral $\int_{t_0}^T\int_{t_0}^T \H^{\Bar{h}}_T(s,t) \d s\d t$ of the Hessian function for different $T$.}
    \label{fig: HessianHeatmapAndIntegral}%
\end{figure}

\section{Hedging of the EPOR}
\label{sec: Hedging}

The hedging of the EPOR is performed in a standard setting using market tradable instruments such as IRSs and interest rate swaptions. The underlying insight is that by investing in tradable instruments replicating the exposure first derivative w.r.t. the risk factors, the total position value is invariant to small changes in the risk factors. In the financial jargon, this technique is referred to as \emph{Delta hedging}, since it aims to replicate -- with the opposite sign -- the ``Delta'' of the exposure using market instruments. Assuming continuous monitoring and rebalancing of the hedging portfolio, simple Delta hedging guarantees ``perfect'' protection regarding all market risk factors. 

Fixing a set of $I$ market instrument prices, $\S_{yc}=[\S_{yc,1},\dots,\S_{yc,I}]^\top$, used to calibrate the prevailing yield curve (YC), the Delta of the EPOR is the gradient of the EPOR price w.r.t. $\S_{yc,i}$, for $i=1,\dots,I$. Calling $\S=[\S_1,\dots,\S_J]^\top$ the prices of the $J$ market instruments used for hedging and $w=[w_1,\dots,w_J]^\top$ the corresponding weights (i.e. the notional), the optimal Delta hedge is achieved by solving the linear system in $w$:
\begin{equation}
\label{eq: DeltaSystem}
    \Delta(V) = \sum_{j=1}^J w_{j} \Delta(\S_{j}),
\end{equation}
where the operator $\Delta(\cdot)$ is the gradient in $\S_{yc}$ of the instrument price given as argument. Particularly, $\Delta(V)\in\R^I$ and $\Delta(\S_{j})\in\R^{I}$ read:
\begin{equation*}
    \Delta(V)=\begin{bmatrix}
        \frac{\partial V}{\partial \S_{yc,1}}\\
        \vdots\\
        \frac{\partial V}{\partial \S_{yc,I}}
    \end{bmatrix},\qquad
    \Delta(S_{j})=\begin{bmatrix}
        \frac{\partial \S_{j}}{\partial \S_{yc,1}}\\
        \vdots\\
        \frac{\partial \S_{j}}{\partial \S_{yc,I}}
    \end{bmatrix},
    \quad j=1,\dots,J.
\end{equation*}

In a more realistic framework, hedging should be parsimonious. Therefore, more robust strategies allowing for less frequent rebalancing dates are preferred. This is achieved by replicating the second derivatives of the exposure w.r.t. the market risk factors. In the financial jargon, \emph{Delta-Gamma hedging} aims to replicate the exposure ``Gamma,'' in addition to the Delta. Delta-Gamma hedging is formalized by including additional conditions based on the second derivatives, leading to the system:
\begin{equation}
\begin{aligned}
\label{eq: DeltaGammaSystem}
    \Delta(V) &= \sum_{j=1}^J w_{j} \Delta(\S_{j}),\\
    \Gamma(V) &= \sum_{j=1}^J w_{j} \Gamma(\S_{j}),
\end{aligned}
\end{equation}
where the operator $\Gamma(\cdot)$ is the Hessian in $\S_{yc}$ of the instrument price given as argument.
Particularly, $\Gamma(V)\in\R^{I\times I}$ and $\Gamma(\S_{j})\in\R^{I\times I}$, $j=1,\dots,J$, read:
\begin{equation*}
    \Gamma(V)=\begin{bmatrix}
            \frac{\partial^2 V}{\partial \S_{yc,1}^2}&\cdots& \frac{\partial^2 V}{\partial \S_{yc,1}\partial \S_{yc,I}}\\
            \vdots&\ddots&\vdots\\
            \frac{\partial^2 V}{\partial \S_{yc,I}\partial \S_{yc,1}}&\cdots& \frac{\partial^2 V}{\partial \S_{yc,I}^2}
        \end{bmatrix},\qquad
        \Gamma(S_j)=\begin{bmatrix}
            \frac{\partial^2 \S_j}{\partial \S_{yc,1}^2}&\cdots& \frac{\partial^2 \S_j}{\partial \S_{yc,1}\partial \S_{yc,I}}\\
            \vdots&\ddots&\vdots\\
            \frac{\partial^2 \S_j}{\partial \S_{yc,I}\partial \S_{yc,1}}&\cdots& \frac{\partial^2 \S_j}{\partial \S_{yc,I}^2}
            \end{bmatrix}.
\end{equation*}

The choice of the hedging instruments is crucial since it determines the existence (and uniqueness) of a solution of \eqref{eq: DeltaSystem} and \eqref{eq: DeltaGammaSystem}.
In general, a standard Delta-hedge is based on linear tradable instruments with (close to\footnote{Interest rate products, such as swaps, show non-zero Gamma because of the dependence of the discount factor on the stochastic interest rate itself (see \Cref{app: SwapDeltaGamma}). However, the Gamma of a swap is orders of magnitude smaller than the Gamma of a swaption written on the same underlying swap.}) vanishing Gamma. Instruments bearing non-vanishing Gamma are necessary for Delta-Gamma hedging, for \eqref{eq: DeltaGammaSystem} is trivially inconsistent otherwise.

In practice, hedging is typically \emph{parsimonious}, meaning that only a few instruments are used to replicate the total exposure. This often leads to overdetermined systems in \eqref{eq: DeltaSystem} and \eqref{eq: DeltaGammaSystem}, generally not allowing for a solution. Hence, the problem of hedging is recast as a minimization problem:
\begin{equation}
\label{eq: MinimizationProblemHedge}
    \min_{\w} ||\zeta_\Delta(\w)||_{\Delta}^2 + k ||\zeta_\Gamma(\w)||_{\Gamma}^2,
\end{equation}
where $\zeta_\Delta(\w) =  \sum_{j=1}^J w_{j} \Delta(\S_{j}) - \Delta(V)\in\R^I$ and $\zeta_\Gamma(\w) =  \sum_{j=1}^J w_{j} \Gamma(\S_{j}) - \Gamma(V)\in\R^{I\times I}$ are the Delta and Gamma mismatches between the hedge and the EPOR, $||\cdot||_\Delta$ and $||\cdot||_\Gamma$ are suitable norms, and $k\geq 0$ is a constant that allows stirring the focus from the Delta to the Gamma (with $k=0$ corresponding to \eqref{eq: DeltaSystem}). When $||\cdot||_\Delta$ is the Euclidean norm and $||\cdot||_\Gamma$ is the Frobenius norm, the objective function is quadratic in $\w$ and its optimal solution is analytically available.

\begin{rem}[Lack of interpretability]
    It is worth pointing out that an explainable hedge is desirable. For instance, if a shock is observed on a specific region of the YC, we would like to be able to adjust the hedge only \emph{locally}, i.e. only rebalancing the investment in the instruments ``responsible'' for the shocked region without completely changing our hedge. However, a hedge computed by solving \eqref{eq: MinimizationProblemHedge} lacks this feature since the weights are computed \emph{globally}. 
    \end{rem}
In the next section, we elaborate on this \emph{interpretability} issue and propose a solution.

\subsection{Explainable hedge}
\label{ssec: ExplainableHedge}

Because of the linear EPOR pricing formula structure in \eqref{eq: ValueEPORIntegral}, it is clear that an appropriate set of swaptions $C(T_j)\equiv C(T_j;K;\T_{\tt{p}},N)$, $j=1,\dots,J$, as defined in \Cref{prop: ValueEPORIntegral}, should provide an accurate hedge for the EPOR exposure. We expect that a suitable linear combination of receiver swaptions closely resembles the Delta and the Gamma of the EPOR. In the notation of the previous section, the hedging instruments are selected as $\S_j=C(T_j)$, where $T_j$ is the maturity of a swaption written on a swap with a notional profile corresponding to the outstanding mortgage notional and fixed rate equal to $K$. 

We divide the integration range $\T=[t_0,T^*]$ in \eqref{eq: ValueEPORIntegral} into subintervals $\RR_j=[R_{j-1},R_{j}]$ such that $\T=\cup_j \RR_j$. By linearity of the integration operator, the EPOR value can be written as:
\begin{equation*}
    V = \sum_j V_j,\qquad V_j\equiv V_j(t_0)=\int_{R_{j-1}}^{R_{j}} C(T) \overline{f^{h}(T)} \d T,
\end{equation*}
i.e. every $V_j$ is the component of the EPOR value conditional to relocation happening in the subinterval $\RR_j$.
To enhance the explainability of the hedge, we solve $J$ ``local'' minimization problems similar to the problem in \eqref{eq: MinimizationProblemHedge}. Particularly, the local minimization problems read:
\begin{equation}
\label{eq: LocalMinimizationProblem}
    \min_{w_j} ||\zeta_{\Delta,j}(w_j)||_\Delta^2 + k_j ||\zeta_{\Gamma,j}(w_j)||_\Gamma^2,\qquad j=1,\dots,J,
\end{equation}
where $\zeta_{\Delta,j}(w_j) = w_j \Delta(\S_j) - \Delta(V_j)$, $\zeta_{\Gamma,j}(w_j) = w_j\Gamma(\S_j) - \Gamma(V_j)$, and $k_j$ are nonnegative constants.
By solving the local minimization problems, we end up with a hedging strategy that is generally suboptimal for the global minimization problem in \eqref{eq: MinimizationProblemHedge}. However, the gain in this setting is an intuitive hedging strategy. The weights of the hedging strategy closely relate to the density of the underlying relocation probability, allowing for quick adjustment of the hedge when movements in the underlying housing market are expected. 

When the subintervals, $\RRR=[\RR_1,\dots,\RR_J]$, and the hedging swaption maturities, $\TTT=[T_1,\dots,T_J]$, are fixed, we indicate as $w^*_j\equiv w^*_j(\RR_j,T_j)$ the optimal weights solving the local minimization problems in \eqref{eq: LocalMinimizationProblem} and $w^*\equiv w^*(\RRR,\TTT)$ the vector with components $w_j^*$. It is not surprising that $w^*_j$ depends on $\RR_j$ and $T_j$. In fact, $\RR_j$ affects $\Delta(V_j)$ and $\Gamma(V_j)$, while $T_j$ has an effect on $\Delta(S_j)$ and $\Gamma(S_j)$. As a consequence, even if the weights $w_j^*$ are optimal for \eqref{eq: LocalMinimizationProblem}, we might be interested in improving the hedge by \emph{optimally selecting $\RR_j$ or $T_j$}.
For instance, given a fixed subinterval selection $\RR_j$, $j=1,\dots,J$, we find the optimal maturity $T_j^*$, within each subinterval, by solving:
\begin{equation}
\label{eq: LocalMinimizationProblemOptimalMaturity}
    \min_{T_j\in\RR_j} ||\zeta_{\Delta,j}(w^*_j(\RR_j,T_j))||_\Delta^2 + k_j ||\zeta_{\Gamma,j}(w^*_j(\RR_j,T_j))||_\Gamma^2,\qquad j=1,\dots,J,
\end{equation}
ensuring that an \emph{optimal swaption maturity $T_j^*$} is selected to replicate the Greeks in every subinterval $\RR_j$.

Alternatively, we fix the criterion to select each maturity given a certain subinterval, i.e. $T_j\equiv T_j(\RR_j)$ for $j=1.\dots,J$, and in general $\TTT=\TTT(\RRR)$, and choose the optimal subintervals $\RRR^*$ to minimize the total mismatch in the Greeks. Formally, this is enforced by solving:
\begin{equation}
\label{eq: LocalGlobalMinimizationProblemOptimalSubintervals}
    \min_{\RRR} ||\zeta_\Delta(w^*(\RRR)||_{\Delta}^2 + k ||\zeta_\Gamma(w^*(\RRR)||_{\Gamma}^2 + k_{vol}Vol(\RRR),
\end{equation}
where the first two terms are taken from the global minimization problem \eqref{eq: MinimizationProblemHedge} and the last term improves the stability of the solution. The weights $w^*(\RRR)\equiv w^*(\RRR,\TTT(\RRR))$ are optimal solutions of the local minimization problems \eqref{eq: LocalMinimizationProblem} for a given subinterval selection $\RRR$ and maturities $\TTT(\RRR)$. Problem \eqref{eq: LocalGlobalMinimizationProblemOptimalSubintervals} aims to compute the subinterval selection, $\RRR^*$, that minimizes the total mismatch in the Greeks in the sense of \eqref{eq: MinimizationProblemHedge} while maintaining the explainability of \eqref{eq: LocalMinimizationProblem} (i.e. only one instrument is used to replicate the EPOR Greeks in every subinterval). The term $Vol(\RRR)$ is defined as:
\begin{equation*}
    Vol(\RRR)= \Bigg(1 - \frac{\prod_{j=1}^J\ell_j}{\Bar{\ell}^J}\Bigg)^J,
\end{equation*}
with $\ell_j$ the length of the subinterval $\RR_j$ and $\Bar{\ell}=\frac{1}{J}\sum_j\ell_j$. Geometrically, the term $Vol(\RRR)$ represents the mismatch in hypervolume between the $J$-dimensional hyperrectangle with side lengths $\ell_j$, $j=1,\dots,J$, and the corresponding hypercube with the same hyperperimeter, i.e. $\sum_j \ell_j$. For $k_{vol}>0$, \eqref{eq: LocalGlobalMinimizationProblemOptimalSubintervals} prevents the subintervals from close to zero lengths while prioritizing a balanced partition of the integration interval.

\subsection{Actuarial hedge}
\label{ssec: ActuarialHedge}

From \Cref{prop: ValueEPORIntegral}, we infer that an \emph{implied HM} trajectory exists such that it yields a relocation density equal to its average. 
This means that the classic Delta-Gamma hedge described in the previous sections is optimal only when the implied HM is observed, and it might perform poorly when the realized HM is different.

As observed in \Cref{cor: EPORDIstribution}, the EPOR value is the expectation over all the HM scenarios of the EPOR value, conditional to a given HM realization. Hence, we may want to build a hedging strategy that protects us against a more general HM outcome. To do so, we rely on the geometric interpretation of the Gamma profile. The Gamma matrix encapsulates information regarding the curvature of the exposure with respect to the quotes of the instruments used to calibrate the yield curve. Particularly, we aim to generate a hedging strategy that matches the exposure Delta as accurately as possible while ensuring a convex total exposure, i.e. we look for a hedging strategy that ensures an increase in value no matter what movements are observed on the calibrating market quotes. Specifically, we add a penalty in the objective of the optimization based on the sign of the eigenvalues of the Hessian matrix for every subinterval $\RR_j$. Ideally, we look for hedging instruments such that the Hessian is positive semidefinite. However, this is in general difficult to achieve, but we can impose the Hessian to be ``as close as possible'' to a positive semidefinite matrix. By ``as close as possible'' we mean that even if an eigenvalue remains negative, its magnitude is reduced as much as possible. Geometrically, along the directions where the total exposure is concave, we reduce the curvature so that a movement along that direction -- the associated eigenvector -- entails a smaller decrease in value. 

Starting from \eqref{eq: LocalMinimizationProblem}, the hedging problem in every subinterval is updated as:
\begin{equation}
\label{eq: MinimizationEigen}
     \min_{w_j} ||\zeta_{\Delta,j}(w_j)||_\Delta^2 + k_j ||\zeta_{\Gamma,j}(w_j)||_\Gamma^2 -k_{eig} \ES_\alpha[\min(\Lambda(\zeta^h_\Gamma(w_j)))],\qquad j=1,\dots,J,
\end{equation}
where $\zeta_{\Delta,j}(w_j)$, $\zeta_{\Gamma,j}(w_j)$, and $k_j$ are as in \eqref{eq: LocalMinimizationProblem}, and $\zeta_{\Gamma,j}^h(w_j)=w_{j} \Gamma(S_{j}) - \Gamma(V_{h,j})$ is the Gamma mismatch in the subinterval $\RR_j$ given HM $h$, i.e. computed using $V_{h,j}=\int_{R_{j-1}}^{R_j} C(T)f^h(T)\d T$. $\ES_\alpha$ is the \emph{expected shortfall} at the $alpha\in[0,1]$ level, i.e. $\ES_{\alpha}[X]:=\E[X|X\leq Q_{\alpha}(X)]$ for $Q_{\alpha}(X)$ the $\alpha$-quantile of the random variable X. $\Lambda(\cdot)$ represents the eigenvalues of the matrix given as argument and $\min(\cdot)$ selects the minimum eigenvalue. $k_{eig}\geq 0$ is a weight term. The first two terms in \eqref{eq: MinimizationEigen} aim to generate a hedging strategy that closely approximates the Delta and the Gamma of the EPOR for each subinterval $\RR_j$. In particular, this is equivalent to matching the average Delta and Gamma profiles $V_{h,j}$. The last term penalizes the concavity along the most significant direction, for the worst possible realization of $h$, i.e. the ``worst'' $\alpha$. Based on the magnitude of $k_{eig}$, the minimization problem generates a hedging strategy that might show a significant mismatch in Delta, compared to the EPOR Delta. For this reason, after the minimization, we improve the strategy by adding a set of IRSs to correct the Delta profile. This does not significantly affect the concavity of the total position since IRSs bear close to zero Gamma. As a result, we obtain a combination of market instruments -- swaps and swaptions -- that closely matches the Delta of the EPOR exposure, while controlling the concavity of the total position.

\section{Data and calibration}
\label{sec: DataCalibration}

In this section, we present the methodology used to calibrate the relocation density. We used time series data, with monthly frequency, starting in December 2012 and ending in November 2023. For each month, we observe the fraction of transactions on a national level and the fraction of borrowers who decided to relocate to a new house. Recalling the notation of \eqref{eq: HMActivity}, the fraction of transactions on the national level observed in a given month starting at time $t$ is given by $h_{\Delta t}(t)=NoT(t,t+\Delta t)/NoH(t)$, $\Delta t = 1/12$. As a proxy for $h(t)$ in \eqref{eq: HMActivity} we use its \emph{discrete} analogous, i.e. $h(t)\approx h_{\Delta t}(t)/\Delta t$. Based on $h(t)$, we perform two distinct calibrations. On the one hand, we assume each $h(t)$ is a realization of a possible HM activity and we use standard moment estimators for expected value and variance to find distributions matching the data. We calibrate normal, lognormal, and (shifted) exponential distributions, respectively. The empirical mean and variance observed on the data are $\hat{\mu}_h=4.470\times 10^{-2}$ and $\hat{\sigma}_h^2=1.215\times 10^{-4}$. The normal, lognormal, and (shifted) exponential distributions for $h$ are used to present the effects of a stochastic HM activity on the EPOR value (see e.g. \Cref{fig: NonlinearAdj,fig: NonlinearAdjDifferentDensities}). In \Cref{fig: MLEHMtoRelocProb}a, the empirical distribution of $h$ is shown (cyan histogram) with the densities of the calibrated normal (dash-dotted blue line) and log-normal (solid-dotted blue line) random variables.  On the other hand, using MLE on the time series of $h$ (see \Cref{fig: MLEHMtoRelocProb}b), we calibrate an Ornstein-Uhlenbeck (OU) process, i.e.:
\begin{equation*}
    \d h(t) = \alpha_h(\theta_h(t) - h(t))\d t + \eta_h\d W^{\P}(t),
\end{equation*}
where $\alpha_h$ is the mean-reversion rate, $\theta_h(t)$ is the time-dependent long-term mean, and $\eta_h$ is the volatility. $\P$ represents the real-world measure. Such a process is used as a scenario generator for future HM activity. The mean-reversion rate and volatility coefficients are $\alpha_h=126$ and $\eta_h=0.115$, respectively. We assume a time-dependent mean that is used to capture possible trends. When used for scenario generation, a suitable time-dependent mean allows the user to include in the model his/her expectation regarding future HM activity. For instance, in the numerical experiments, we show examples where different HM ``trends'' are considered, flat (i.e., $\theta(t)=\hat{\mu}_h$), increasing (e.g., $\theta(t)=\hat{\mu}_h+2\hat{\sigma}_h\frac{t-t_0}{T^*-t_0}$) and decreasing (e.g., $\theta(t)=\hat{\mu}_h-2\hat{\sigma}_h\frac{t-t_0}{T^*-t_0}$).

\begin{figure}[t]
    \centering
    \subfloat[\centering]{{\includegraphics[width=6.9cm]{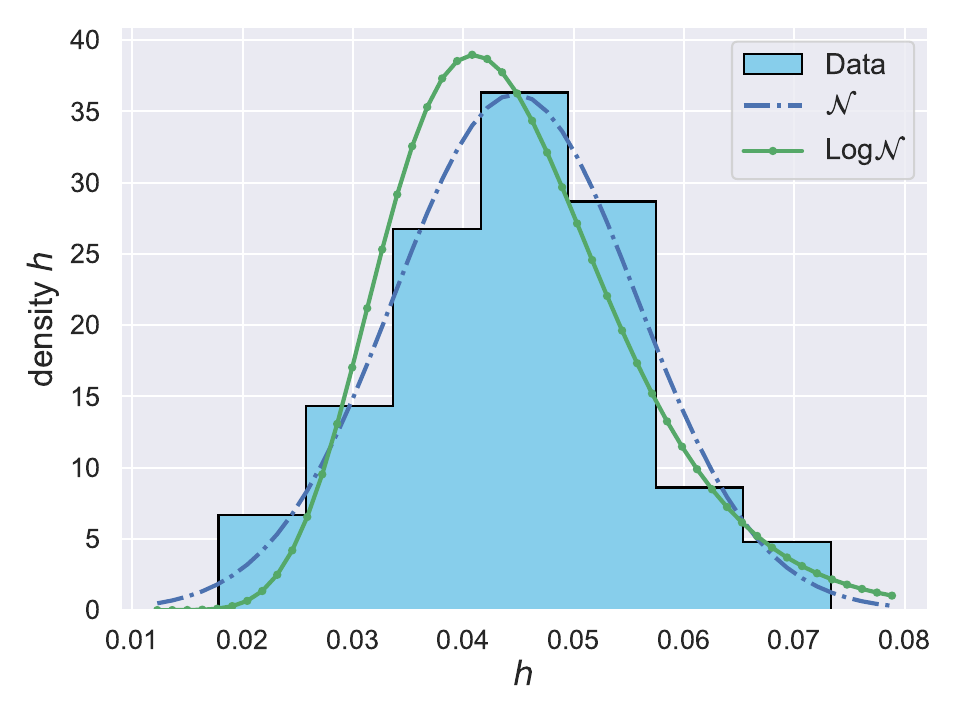} }}%
    ~\hspace{-.5cm}
    \subfloat[\centering]{{\includegraphics[width=7.5cm]{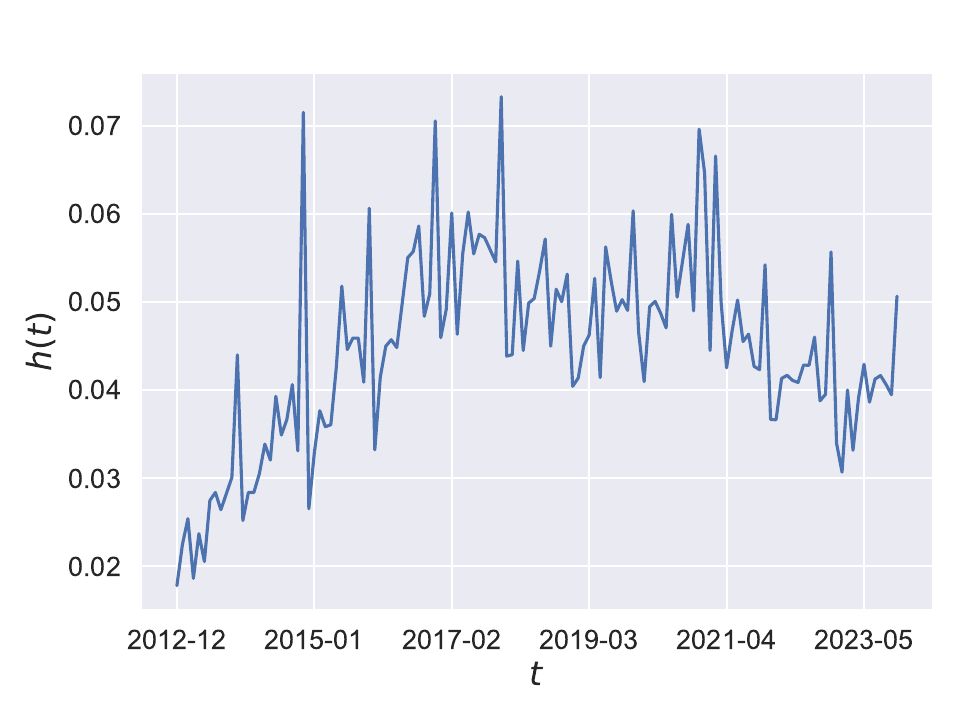} }}
    \caption{\footnotesize Left: Empirical distribution of the observed HM activity (cyan histogram), calibrated normal and lognormal densities (dashed-dotted blue and dotted green lines, respectively). Right: HM activity time series.}
    \label{fig: MLEHMtoRelocProb}%
\end{figure}

The same data are used to infer the relationship between the relocation density intensity $\lambda(t)$ and the HM activity $h(t)$. The probability of a borrower's relocation in a certain period, given no relocation has occurred before, is defined as $p_{\Delta t}(t)=\P\big[t < \tau \leq t + \Delta t\big| \tau > t\big]$ and is approximated with the fraction of borrowers relocating in the observed month. We assume a logistic function with polynomial argument maps $h_{\Delta t}(t)$ into $p_{\Delta t}(t)$. In particular, we set the equation:
\begin{equation*}
    p_{\Delta t}(t|\beta_{\Delta t})=\frac{1}{1+\e^{-\beta_{\Delta t}^\top H_{\Delta t}(t)}},
\end{equation*}
where $H_{\Delta t}(t)=[1,h_{\Delta t}(t),h_{\Delta t}^2(t)]^\top$ and $\beta_{\Delta t}$ are the coefficients of the polynomial $\beta_{\Delta t}^\top H_{\Delta t}(t)$. The optimal coefficients $\beta_{\Delta t}^*$ are obtained by maximizing the likelihood function conditional to the available data.
Then, the intensity $\lambda(t)$ is represented as a function of the HM activity $h(t)$ using the relationship $p_{\Delta t}(t) \approx \lambda(t)\Delta t$ for \emph{small} $\Delta t$. We end up with the expression for the intensity:
\begin{equation}
\label{eq: MLEEstimatorLambda}
    \lambda(t|\beta^*)\approx\frac{p_{\Delta t}(t|\beta_{\Delta t}^*)}{\Delta t}=\frac{\Delta t^{-1}}{1+\e^{-{\beta^*}^\top H(t)}},
\end{equation}
where $H(t)=[1,h(t), h(t)^2]$ and the optimal coefficients $\beta^*$ are scaled accordingly with the time discretization, i.e. $\beta^*_i=\beta_{\Delta t,i}^* \Delta t^i$, for $i=0,1,2$. The calibrated coefficients equal $\beta_0^*=-7.50$, $\beta_1^*=54.18$, $\beta_2^*=-326.86$.
In \Cref{fig: LambdaTimeseries}a, we report the calibrated intensity $\lambda$ as a function of the HM activity $h$, whereas \Cref{fig: LambdaTimeseries}b shows that the observed intensity $p_{\Delta t}(t)/\Delta t$ (solid red line) is closely approximated by $\lambda(t|\beta^*)$ (dash-dotted green line) computed using \eqref{eq: MLEEstimatorLambda}.

\begin{figure}[t]
    \centering
    \subfloat[\centering]{{\includegraphics[width=7.2cm]{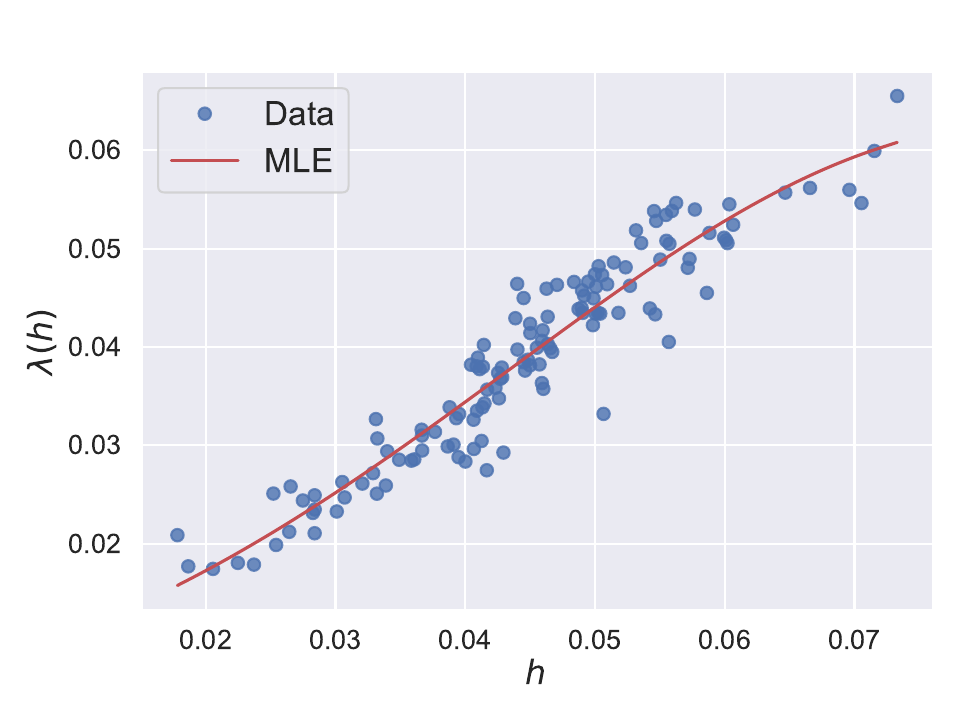} }}%
    ~\hspace{-.5cm}
    \subfloat[\centering]{{\includegraphics[width=7.2cm]{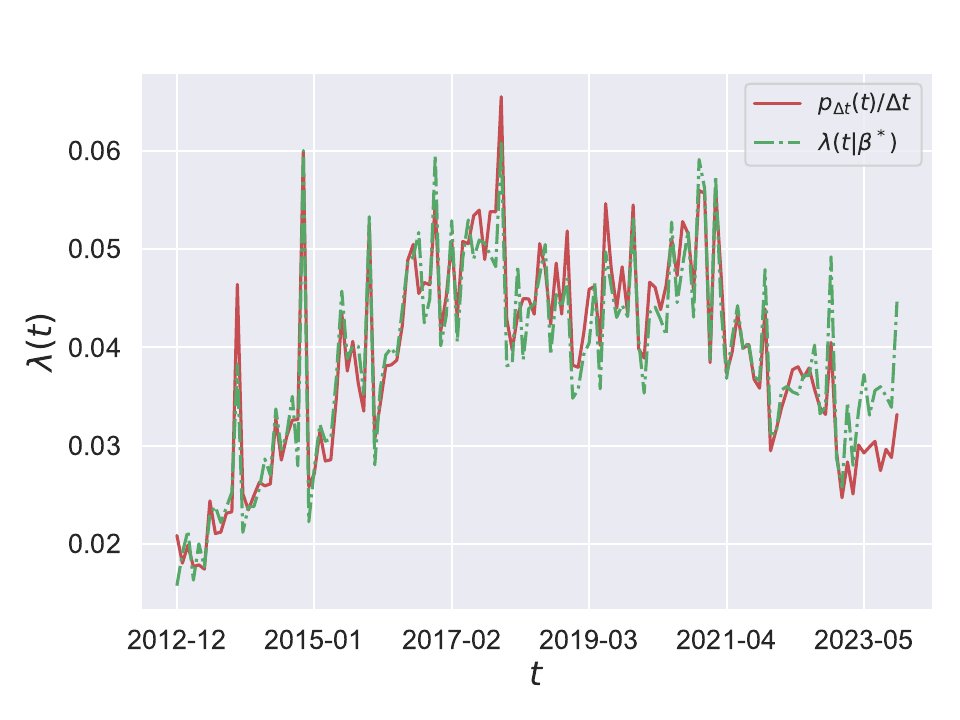} }}
    \caption{\footnotesize Left: Mapping from $h$ to $\lambda(h)$. Data vs MLE. Right: Comparison between observed and approximated intensity.}
    \label{fig: LambdaTimeseries}%
\end{figure}

\section{Experiments}
\label{sec: Experiments}

\begin{figure}[b]
    \centering
    \subfloat[\centering]{{\includegraphics[width=7.1cm]{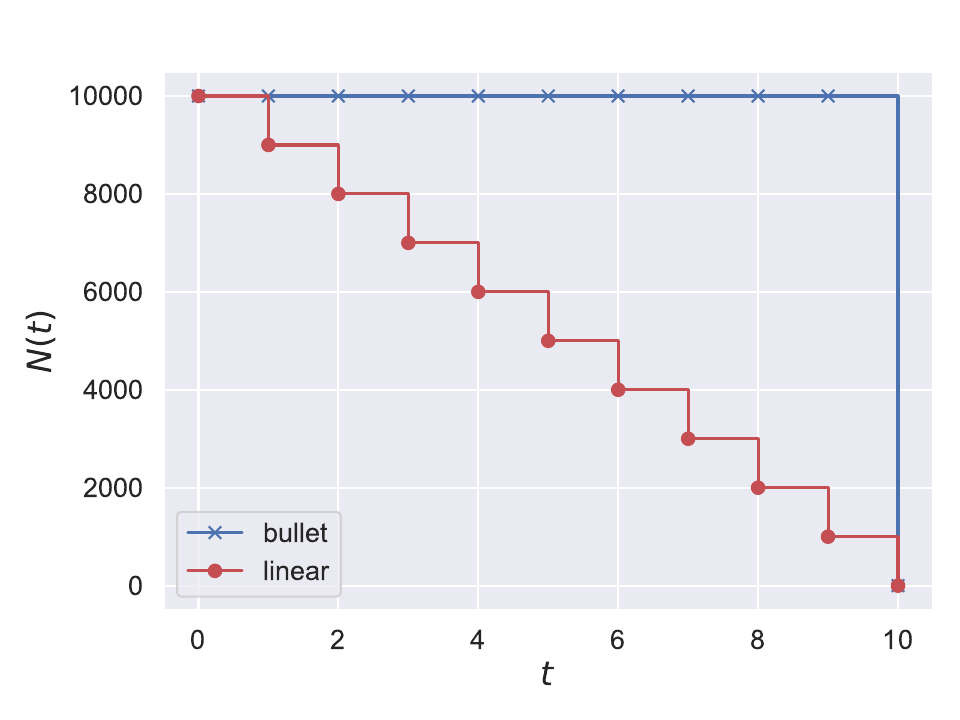} }}%
    ~\hspace{-.2cm}
    \subfloat[\centering]{{\includegraphics[width=7.1cm]{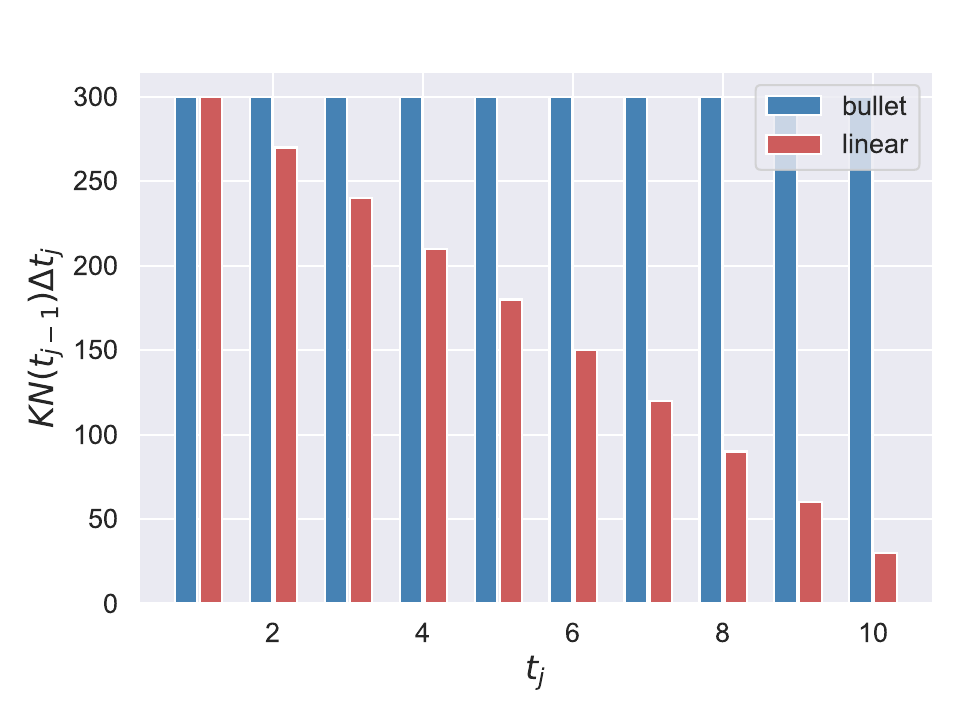} }}
    \caption{\footnotesize Left: Amortization schemes for bullet (crossed blue line) and linear (dotted red line). Right: corresponding interest payments.}
    \label{fig: NotionalProfiles}%
\end{figure}

\begin{figure}[b!]
    \centering
    \subfloat[\centering]{{\includegraphics[width=7.1cm]{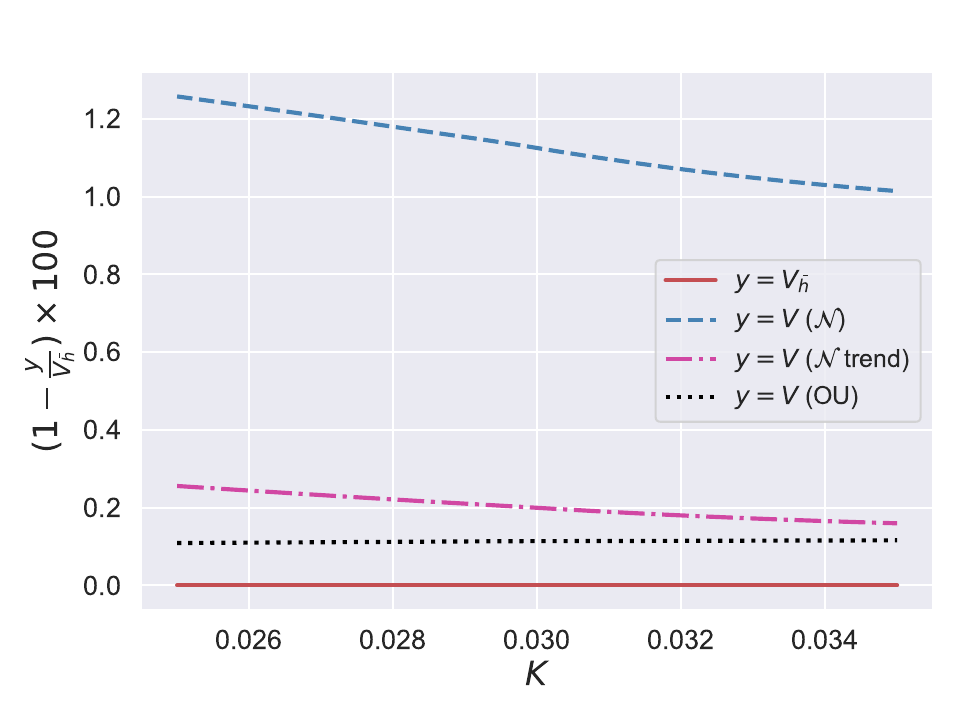} }}%
    ~\hspace{-.2cm}
    \subfloat[\centering]{{\includegraphics[width=7.1cm]{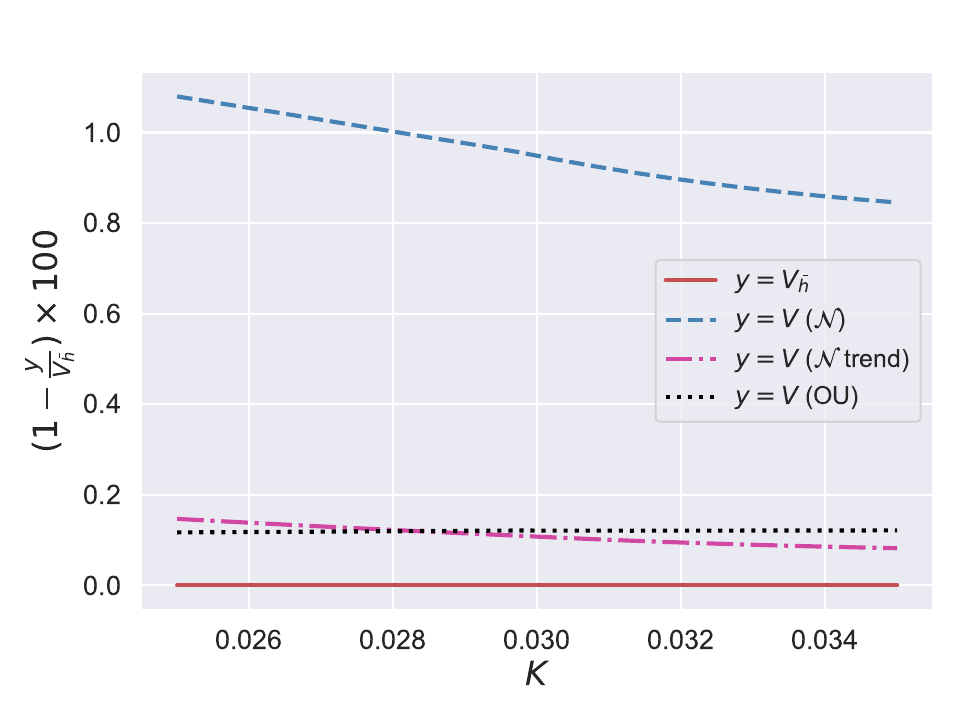} }}\\
    \subfloat[\centering]{{\includegraphics[width=7.1cm]{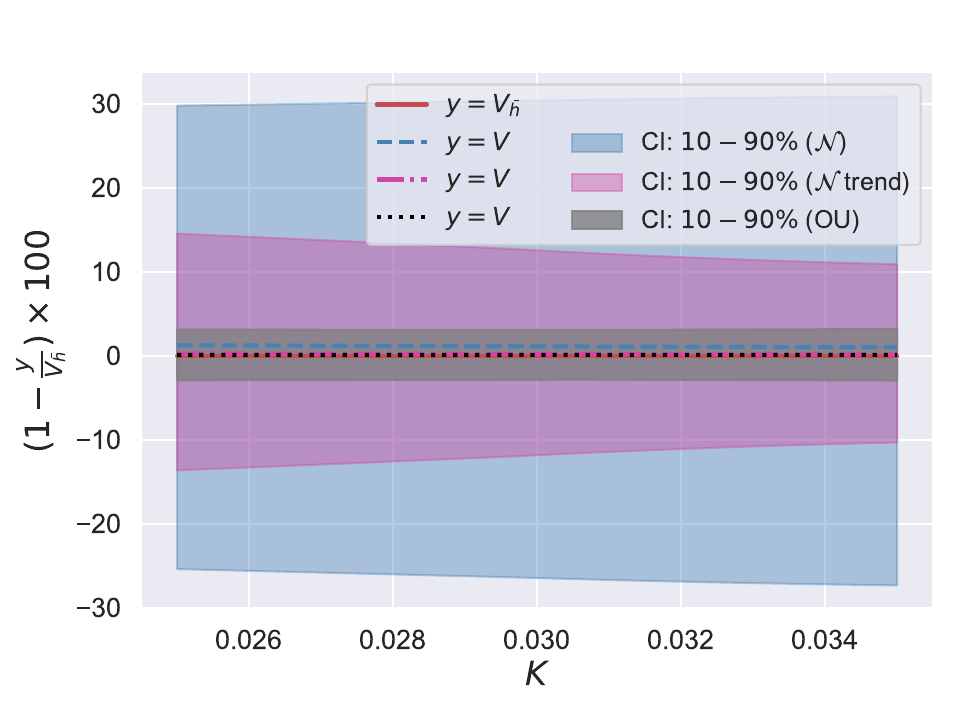} }}%
    ~\hspace{-.2cm}
    \subfloat[\centering]{{\includegraphics[width=7.1cm]{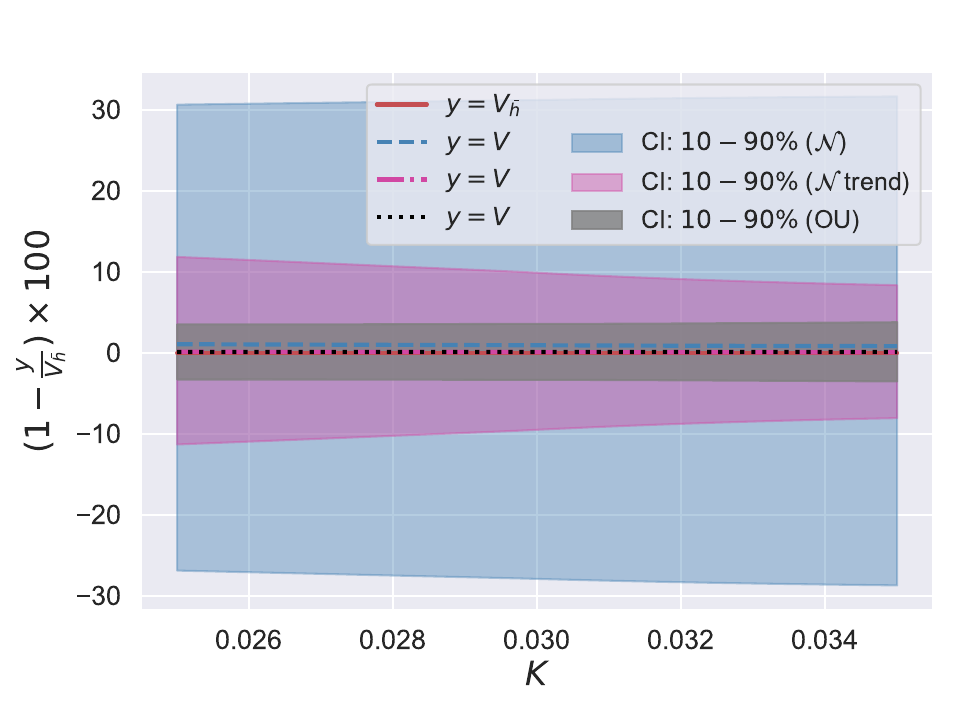} }}
    \caption{\footnotesize Top: Relative price mismatch for bullet (left) and linear (right) amortization schemes. Bottom: Corresponding $10-90\%$ confidence intervals for different HM realizations (bullet on the left and linear on the right).}
    \label{fig: RelativePricesWithCOnfidenceIntervals}%
\end{figure}

\begin{figure}[t!]
    \centering
    \subfloat[\centering]{{\includegraphics[width=7.1cm]{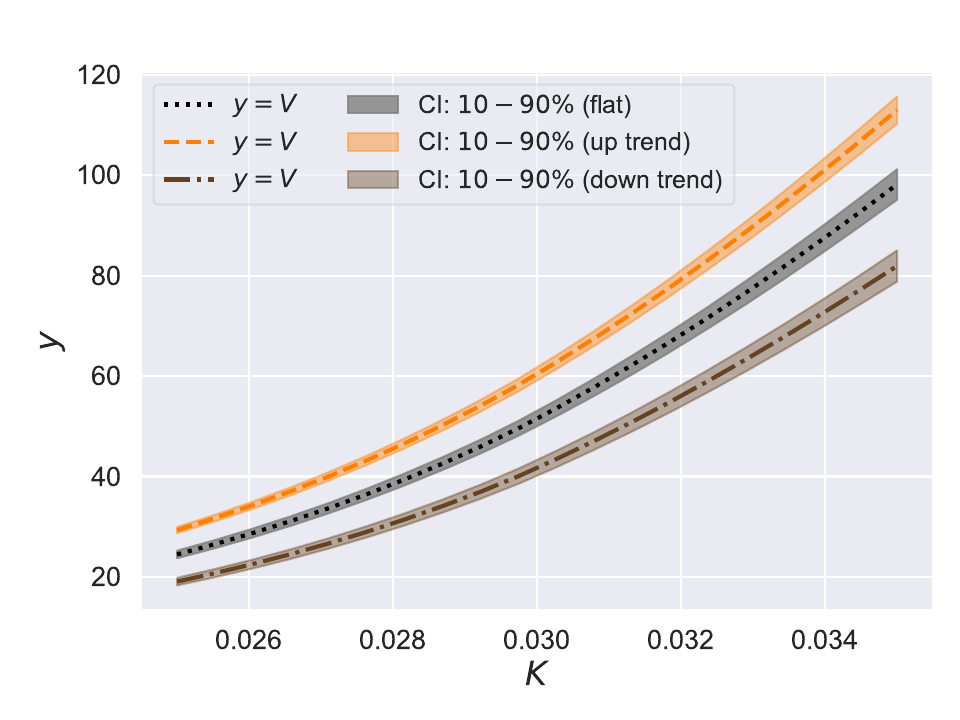} }}%
    ~\hspace{-.2cm}
    \subfloat[\centering]{{\includegraphics[width=7.1cm]{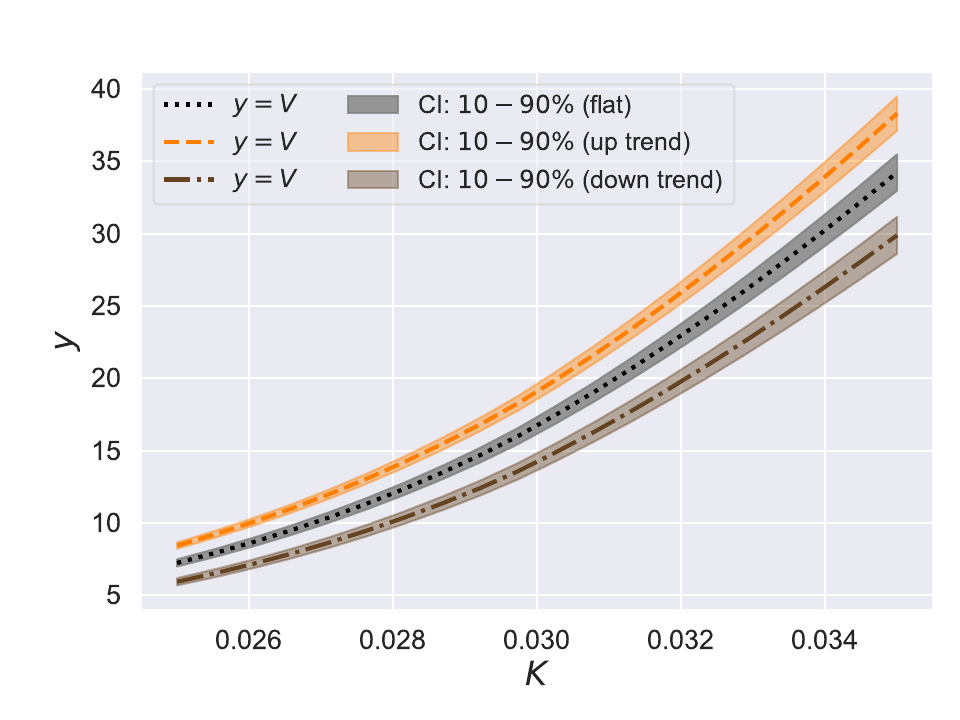} }}
    \caption{\footnotesize Absolute value EPOR (with 10\%-90\% CI) under OU housing market activity with flat, increasing, and decreasing time-dependent long-term mean, $\theta(t)$. Left: Bullet. Right: Linear.}
    \label{fig: PricesDifferentTrendsHM}%
\end{figure}

This section presents the numerical findings from the methodology described in the previous sections.

\subsection{Pricing and impact of the stochastic HM model assumption}

We develop our experiments assuming a flat YC calibrated on market swaps with a par rate $K=0.03$. In the time window $\T=[t_0,T^*]$, $t_0=0$, $T^*=10$, two different mortgage notional amortization schemes are considered: interest-only mortgages (also called ``bullet'') and linear amortization mortgages (referred as ``linear''). The first amortization scheme assumes a constant notional over time, with full redemption of the total debt at the end of the mortgage tenor, $T^*$, and no repayments over the life of the mortgage. In contrast, the latter assumes constant notional repayments over time leading to a zero outstanding notional debt at the final time, $T^*$. In both cases, we consider an initial notional $N(t_0)=10000$, and the results are reported in \emph{basis points} (bps) of the initial notional. The two amortization schemes are illustrated in \Cref{fig: NotionalProfiles}a considering a yearly repayment frequency. The crossed blue and dotted red lines refer to the bullet and the linear contracts, respectively. With the fixed rate selected as the par rate, i.e. $K=0.03$, in \Cref{fig: NotionalProfiles}b, we show the amount of interest paid at every payment date, computed as $KN(t_{j-1})\Delta t_j$, with $\Delta t_j=t_j - t_{j-1}$. A bullet mortgage pays a constant amount of interest over time, while the interest for a linear mortgage decreases following the outstanding notional decrease. The mismatch in interest payments entails that, generally, the EPOR on a bullet has a higher price than the EPOR on a linear mortgage.

For bullet and linear amortization schemes, we compute the EPOR value for fixed rates $K\in[2.5\%,3.5\%]$ considering different models for the stochastic housing market activity, $h$. A significant effect -- about $1\%$ relative difference, see the dashed blue lines in Figures \ref{fig: RelativePricesWithCOnfidenceIntervals}a and \ref{fig: RelativePricesWithCOnfidenceIntervals}b -- is observed only when $h$ is considered normally distributed initially and then flat over the life of the mortgage, as it was assumed in \Cref{fig: DensityDistributions}a and in \Cref{exm: FlatStochasticHM}. The normal distribution is calibrated in \Cref{sec: DataCalibration}, and its density is represented in \Cref{fig: MLEHMtoRelocProb}a as a dashed-dotted blue line. In the other two cases, i.e. when the $h$ grows linearly from the average housing market activity to a final normally distributed activity, as it was assumed in \Cref{fig: DensityDistributions}b, and when $h$ is generated from an Ornstein-Uhlenbeck process, calibrated on the time series in \Cref{fig: MLEHMtoRelocProb}b, with mean the average housing market activity, the relative impact is not so significant -- lower than $0.3\%$, see the dashed-dotted magenta and the dotted black lines in Figures \ref{fig: RelativePricesWithCOnfidenceIntervals}a and \ref{fig: RelativePricesWithCOnfidenceIntervals}b. More interesting, however, is the effect of stochastic $h$ on the EPOR price variance. In Figures \ref{fig: RelativePricesWithCOnfidenceIntervals}c and \ref{fig: RelativePricesWithCOnfidenceIntervals}d, we illustrate the EPOR price ranges between the $10\%$ and $90\%$ quantiles (we call them ``confidence intervals'' (CIs)). Even when the EPOR price is not significantly affected by the stochastic housing market model assumption, the price might show a significant variation, up to several percentage points, represented by the magenta and gray shadows in Figures \ref{fig: RelativePricesWithCOnfidenceIntervals}c and \ref{fig: RelativePricesWithCOnfidenceIntervals}d. Hence, the assumption of stochastic $h$ is relevant to assess the variability in the EPOR price given different housing market activity realizations.

\begin{table}[b]
    \caption{\footnotesize Hedging swaption maturities $\TTT$, ranges $\RRR$, and optimal weights $w^*$ for the different hedging strategies applied to the bullet EPOR.}
    \centering
    \begin{tabular}{ccc|ccc|ccc}
        \toprule
        \multicolumn{3}{c|}{\textit{OpR-MiM}} & \multicolumn{3}{c|}{\textit{FxR-OpM}} & \multicolumn{3}{c}{\textit{FxR-MiM}} \\
        $\TTT$ & $\RRR$ & $w^*$ &$\TTT$ & $\RRR$ & $w^*$ &$\TTT$ & $\RRR$ & $w^*$\\
        \midrule
        0.94 & [0.00, 1.87] & 0.073 & 1.63 & [0.00, 3.33] & 0.124 & 1.67 & [0.00, 3.33] & 0.124 \\
        3.92 & [1.87, 5.97] & 0.136 & 5.53 & [3.33, 6.67] & 0.103 & 5.00 & [3.33, 6.67] & 0.097 \\
        7.98 & [5.97, 10.00] & 0.105 & 8.03 & [6.67, 10.00] & 0.080 & 8.33 & [6.67, 10.00] & 0.095 \\
        \bottomrule
    \end{tabular}
    \label{tab: HedgeBullet}
\end{table}

The major effect on the EPOR price is the level of the housing market activity. Hence, it is crucial to include in the pricing model an accurate forecast of potential future HM trends. To illustrate this fact, we compare the prices obtained using an Ornstein-Uhlenbeck model to generate $h$ when different time-varying means are considered. Specifically, we observe the three cases of a flat, increasing, and decreasing HM activityOrnstein-Uhlenbeck mean. In Figures \ref{fig: PricesDifferentTrendsHM}a and \ref{fig: PricesDifferentTrendsHM}b are displayed -- for bullet and linear mortgages, respectively -- the EPOR prices under the three HM trend assumptions with the corresponding $10\%-90\%$ CIs. The flat, increasing, and decreasing trends are represented in black, orange, and brown, respectively.

\subsection{Hedging of the EPOR exposure}

This section is dedicated to the hedging numerical experiments. We follow the theory of \Cref{sec: Hedging} and we report our numerical findings.

\subsubsection{Explainable hedge}

\begin{table}[t]
    \centering
        \caption{\footnotesize Hedging swaption maturities $\TTT$, ranges $\RRR$, and optimal weights $w^*$ for the different hedging strategies applied to the linear EPOR.}
    \begin{tabular}{ccc|ccc|ccc}
        \toprule
        \multicolumn{3}{c|}{\textit{OpR-MiM}} & \multicolumn{3}{c|}{\textit{FxR-OpM}} & \multicolumn{3}{c}{\textit{FxR-MiM}} \\
        $\TTT$ & $\RRR$ & $w^*$ &$\TTT$ & $\RRR$ & $w^*$ &$\TTT$ & $\RRR$ & $w^*$\\
        \midrule
        0.53 & [0.00, 1.05] & 0.040 & 1.28 & [0.00, 2.00] & 0.073 & 1.00 & [0.00, 2.00] & 0.070 \\
        1.91 & [1.05, 2.77] & 0.062 & 3.28 & [2.00, 4.00] & 0.066 & 3.00 & [2.00, 4.00] & 0.059 \\
        3.71 & [2.77, 4.64] & 0.063 & 4.68 & [4.00, 6.00] & 0.055 & 5.00 & [4.00, 6.00] & 0.050 \\
        5.61 & [4.64, 6.58] & 0.060 & 6.68 & [6.00, 8.00] & 0.051 & 7.00 & [6.00, 8.00] & 0.045 \\
        8.29 & [6.58, 10.00] & 0.124 & 8.48 & [8.00, 10.00] & 0.033 & 9.00 & [8.00, 10.00] & 0.074 \\
        \bottomrule
    \end{tabular}
    \label{tab: HedgeLinear}
\end{table}

\begin{table}[t]
    \centering
        \caption{\footnotesize EPOR value and cost of the different hedging strategies (values are reported in bps).}
    \begin{tabular}{|c|c|c|c|c|}
        \hline
        & \textit{EPOR} & \textit{OpR-MiM} & \textit{FxR-OpM} & \textit{FxR-MiM} \\
        \hline
        {Bullet} & 48.53 & 49.96 & 49.74 & 50.21 \\
        \hline
        {Linear} & 15.66 & 15.95 & 15.91 & 14.06 \\
        \hline
    \end{tabular}
    \label{tab: HedgingCost}
\end{table}

We use a flat YC built on five IRS market quotes for our hedging experiments. The instrument $\S_{yc,i}$, see \Cref{sec: Hedging}, is the swap starting at $t_0$ and ending at $t_i=1,3,5,7,10$ years. As outlined in \Cref{ssec: ExplainableHedge}, we test different strategies for hedging the bullet and linear mortgages. We indicate with \textit{FxR-MiM}, \textit{FxR-OpM}, and \textit{OpR-MiM} the three strategies. \textit{FxR} indicates that the ranges $\RRR$ are fixed a priori (\textit{F}i\textit{x}ed \textit{R}anges) as opposed to \textit{OpR} where the ranges are selected according with some optimality principle (\textit{Op}timal \textit{R}anges). \textit{MiM} indicates that for a given range, the maturity in that range is selected as its mid-point (\textit{Mi}d-point \textit{M}aturity), while \textit{OpM} indicates some optimality criterion has been used to select the maturity (\textit{Op}timal \textit{M}aturity). \textit{FxR-MiM}, \textit{FxR-OpM}, and \textit{OpR-MiM} correspond to \eqref{eq: LocalMinimizationProblem}, \eqref{eq: LocalMinimizationProblemOptimalMaturity}, \eqref{eq: LocalGlobalMinimizationProblemOptimalSubintervals}, respectively.

Aiming at parsimonious hedging strategies, we show the results obtained using three instruments to replicate the bullet EPOR exposure, while five are used for the linear case. Swaptions written on amortizing swaps -- used in the linear case -- show more complex Delta and Gamma profiles, particularly for short maturities. This entails that a satisfactory replication of the EPOR exposure requires a greater number of hedging instruments.

\Cref{tab: HedgeBullet} reports maturities, ranges, and weights from the three hedging strategies in the case of the bullet EPOR. Compared to the baseline case \textit{FxR-MiM}, where the ranges are taken equally spaced in $[t_0,T^*]$ and the maturities of the swaptions are the mid-points of every range, improved strategies are obtained by optimally selecting the swaption maturities (\textit{FxR-OpM}) or by optimally selecting the ranges (\textit{OpR-MiM}). Particularly, notice that a range length is convenient to approximate the Greeks in the first subinterval ($[0.00, 1.87]$ instead of $[0.00, 3.33]$) because of the convexity of the Gamma profile components in the given range. Similarly, in \Cref{tab: HedgeLinear}, the results regarding the linear EPOR hedge are reported. From \Cref{tab: HedgingCost}, we observe that the cost of the hedging strategies is consistent with the EPOR value. Prepayment in the case of a bullet mortgage bears higher risk -- reflected by a higher value and cost of hedging -- when compared to the linear counterpart.

\begin{figure}[b]
    \centering
    \subfloat[\centering]{{\includegraphics[width=7.1cm]{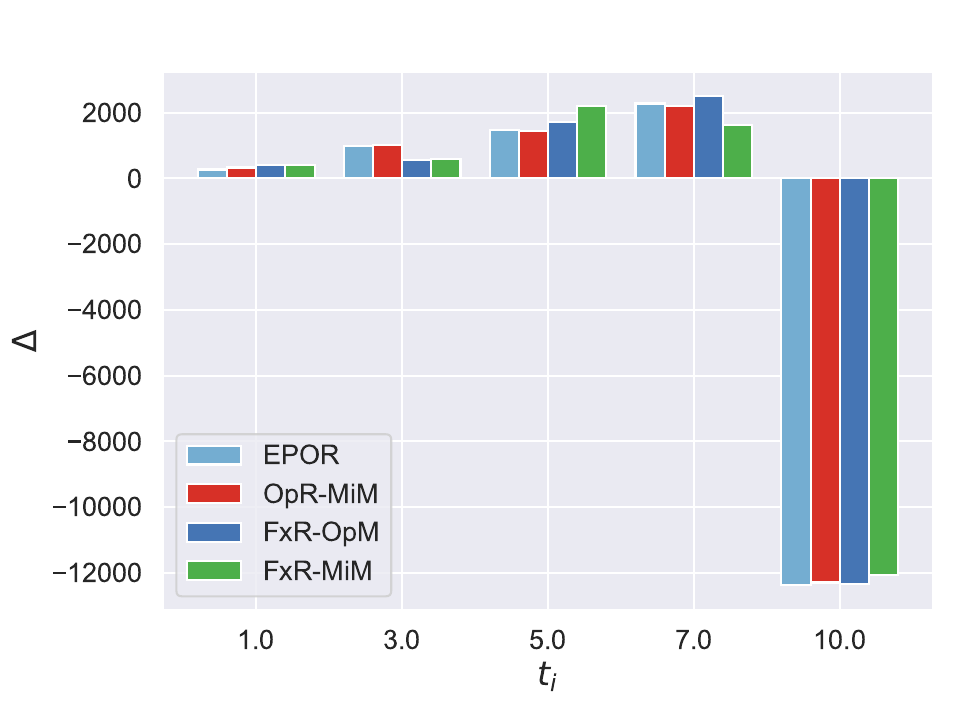} }}%
    ~\hspace{-.2cm}
    \subfloat[\centering]{{\includegraphics[width=7.1cm]{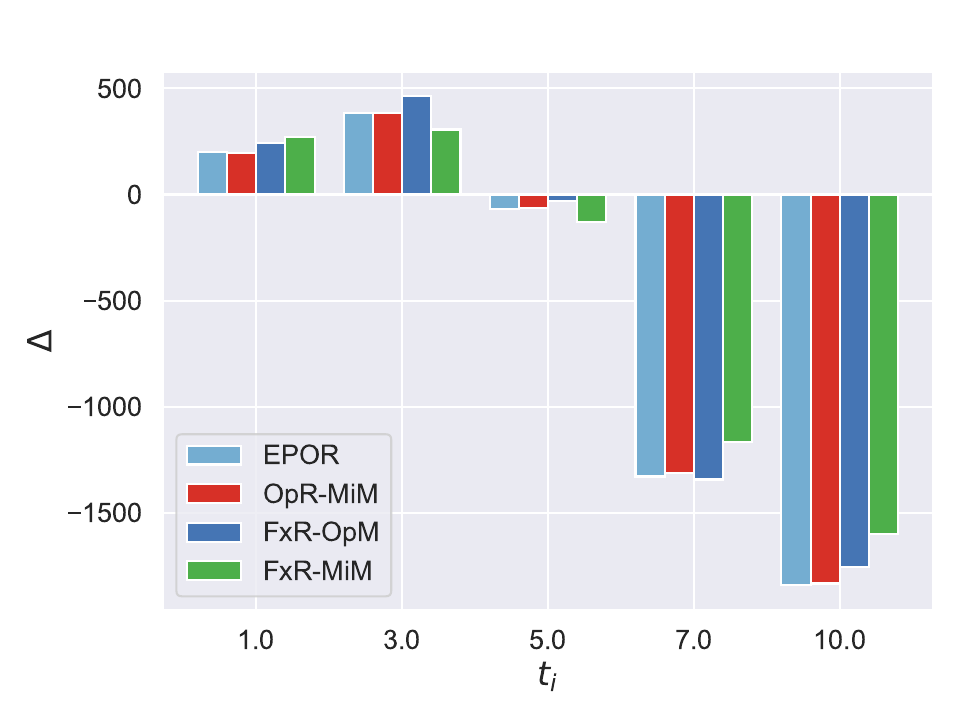} }}
    \caption{\footnotesize Delta profile of EPOR and heading strategies. Left: bullet. Right: linear.}
    \label{fig: DeltaProfiles}%
\end{figure}

\begin{figure}[t]
    \centering
    \includegraphics[width=1.\textwidth]{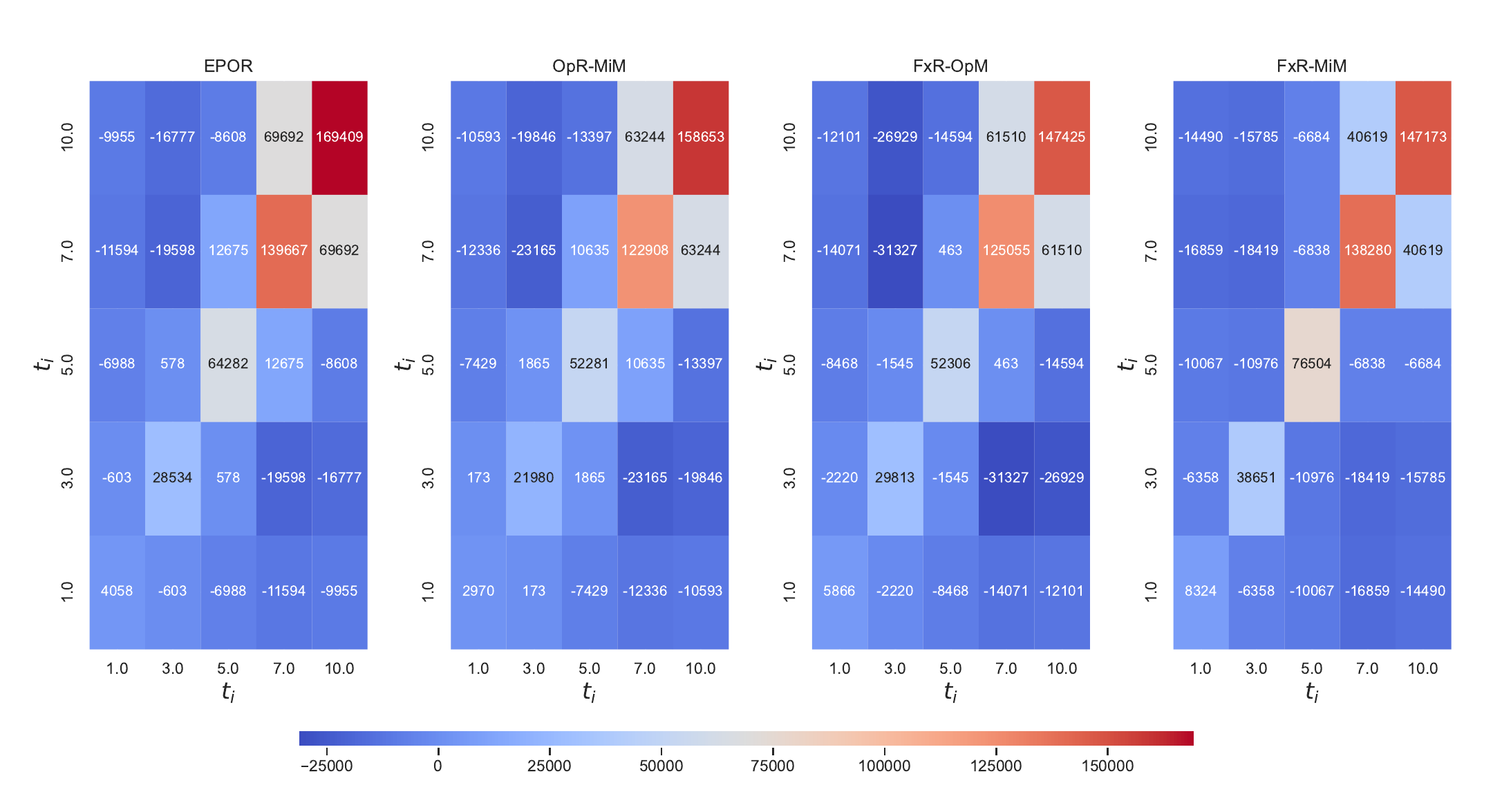}
    \caption{\footnotesize Gamma profile of linear EPOR and hedging strategies.}
    \label{fig: GammaProfiles}
\end{figure}

In \Cref{fig: DeltaProfiles}, the EPOR Delta profiles for the bullet (left) and the linear (right) EPOR are depicted. Because of the amortization schedule, the bullet EPOR Delta is much more significant than the linear EPOR Delta. Furthermore, for the bullet EPOR the Delta is ``concentrated'' on the final maturity $T^*$ (resembling the Delta profile of a vanilla swaption), whereas for the linear case, the profile is more uniform. This is consistent with the amortization type: the linear EPOR resembles an amortizing swaption. Next to the EPOR Delta profile -- represented as a light blue column, we report the Delta profile of the three hedging strategies in red, blue, and green for \textit{OpR-MiM}, \textit{FxR-OpM}, and \textit{FxR-MiM}, respectively. All the strategies are capable of hedging the majority of the EPOR exposure. However, \textit{OpR-MiM} is the strategy that performs best, while \textit{FxR-MiM} often cannot perfectly capture the EPOR Delta, in particular not in the linear case. Such a finding suggests that an optimal selection of the ranges is important to achieve an accurate, yet parsimonious, hedge.

In \Cref{fig: GammaProfiles}, the heatmap of the Gamma profile for the linear EPOR is reported on the left side. The most significant components correspond to the main diagonal of the Gamma profile. Specifically, a more significant Gamma corresponds to a greater shocked maturity (upper right corner). The three heatmaps next to the EPOR Gamma profile are the hedge Gamma for the different hedging strategies. As we observed for the Delta profile in \Cref{fig: DeltaProfiles}b, \textit{FxR-MiM} performs the worst, an improvement is achieved with strategy \textit{FxR-OpM} and the best performance is obtained with \textit{OpR-MiM}. A similar result is found in the case of a bullet EPOR, even though such a case is of less interest since the Gamma is almost exclusively ``concentrated'' in the upper right entry of the Gamma profile.


\subsubsection{Actuarial hedge}

We repeat the hedging experiments following the strategy delineated in \Cref{ssec: ActuarialHedge}. We refer to the hedging strategy as \textit{Eigen}. For illustration purposes, we consider a YC calibrated on three swaps starting at $t_0$ and ending after $1,4,10$ years, respectively. To ensure precise Delta profile matching, in this experiment, we use six swaptions selected solving \eqref{eq: LocalGlobalMinimizationProblemOptimalSubintervals} (i.e., resulting from the \textit{OpR-MiM} strategy in the previous section). To assess the quality of the hedging strategy we observe how a 50 bps shock, in absolute value, on the calibrating swap market quotes affects the position value for the strategy \textit{Eigen}, as opposed to the strategy \textit{OpR-MiM}. We include all the combinations of positive and negative 25 bps shocks on different market quotes. The hyperparameter of the optimization problem is taken as $k_{eig}=3,1$, for the bullet and linear cases, respectively.


\begin{table}[b!]
    \caption{\footnotesize $\ES_{1\%}[\Delta V_{h,shock}]$ at the $1\%$ level and $\P[\Delta V_{h,shock}<0]$ for some of the most relevant shocks for linear EPOR considering the two strategies \textit{OpR-MiM} and \textit{Eigen}.}
    \centering
    \begin{tabular}{c|c|ccccc}
        \toprule
        \multicolumn{2}{c}{} & \multicolumn{5}{c}{Shocks} \\
        \multicolumn{2}{c}{} & [0 0 50] & [0 0 -50] & [0 25 25] & [0 -25 25] & [0 -25 -25] \\
        \midrule
        \multirow{2}{*}{$\ES_{1\%}[\Delta V_{h,shock}]$} 
          & \textit{OpR-MiM}  & -0.618 & -1.032 & -0.412 & -0.311 & -0.609 \\
          & \textit{Eigen}    & -0.266 & -0.642 & -0.283 & -0.223 & -0.440 \\
        \midrule
        \multirow{2}{*}{$\P[\Delta V_{h,shock}<0]$} 
          & \textit{OpR-MiM}  & 74.1\% & 68.3\% & 73.4\% & 61.2\% & 73.4\% \\
          & \textit{Eigen}    & 11.4\% & 24.7\% & 34.6\% & 28.8\% & 40.0\% \\
        \bottomrule
    \end{tabular}
    \label{tab: HedgeEigenLinear}
\end{table}

In \Cref{tab: HedgeEigenLinear} and \Cref{fig: HedgeEigenLinear}, we present the result of the strategy \textit{Eigen} compared to the strategy \textit{OpR-MiM} for the linear EPOR case. The quantity of interest is defined as:
\begin{equation}
\label{eq: ShockedValue}
    \Delta V_{h,shock}=\Big(\sum_j w_j \S_{j, shock} - V_{h,shock}\Big) - \Big(\sum_j w_j \S_{j} - V_{h}\Big),
\end{equation}
where $\S_j$ and $V_h$ are the hedging instrument value and the EPOR value (given HM $h$), respectively, and $\S_{j,shock}$ and $V_{h,shock}$ are the corresponding values given a YC shock. $\Delta V_{h,shock}$ is the change in value of the total position, i.e. the EPOR combined with the hedge, when a certain shock is observed.

\Cref{tab: HedgeEigenLinear} reports the expected shortfall of $\Delta V_{h,shock}$ in \eqref{eq: ShockedValue} at the level $1\%$ and the probability of incurring in a loss, $\P[\Delta V_{h,shock}<0]$, for the most relevant shocks. Compared to the strategy \textit{OpR-MiM}, the strategy \textit{Eigen} significantly reduces the potential losses, summarized by the expected shortfall, occurring when averse HM scenarios are observed. Similarly, the probability of observing a negative $\Delta V_{h,shock}<0$ is significantly reduced. \Cref{fig: HedgeEigenLinear} shows the histograms of the total position variations for each YC shock, for the linear amortization scheme. 
The blue histograms represent the variation observed when the hedging strategy \textit{OpR-MiM} is used, while the orange histograms correspond to the strategy \textit{Eigen} obtained by penalizing negative eigenvalues, see \eqref{eq: MinimizationEigen}. Hedging strategy \textit{OpR-MiM} leads to a change in the value distribution centered around zero. This effect is explained by the fact that such a hedging strategy aims to match the average Delta and Gamma of the EPOR exposure. Hence, depending on the shock sign, both positive and negative value changes are observed. By penalizing the negative eigenvalues, we achieve a more robust hedge that reduces negative changes in value in favor of positive ones. We highlight with a green background the shocks where \textit{Eigen} performs better than \textit{OpR-MiM}. A red background is used otherwise.

\begin{figure}[t!]
    \centering
    \includegraphics[width=1.\textwidth]{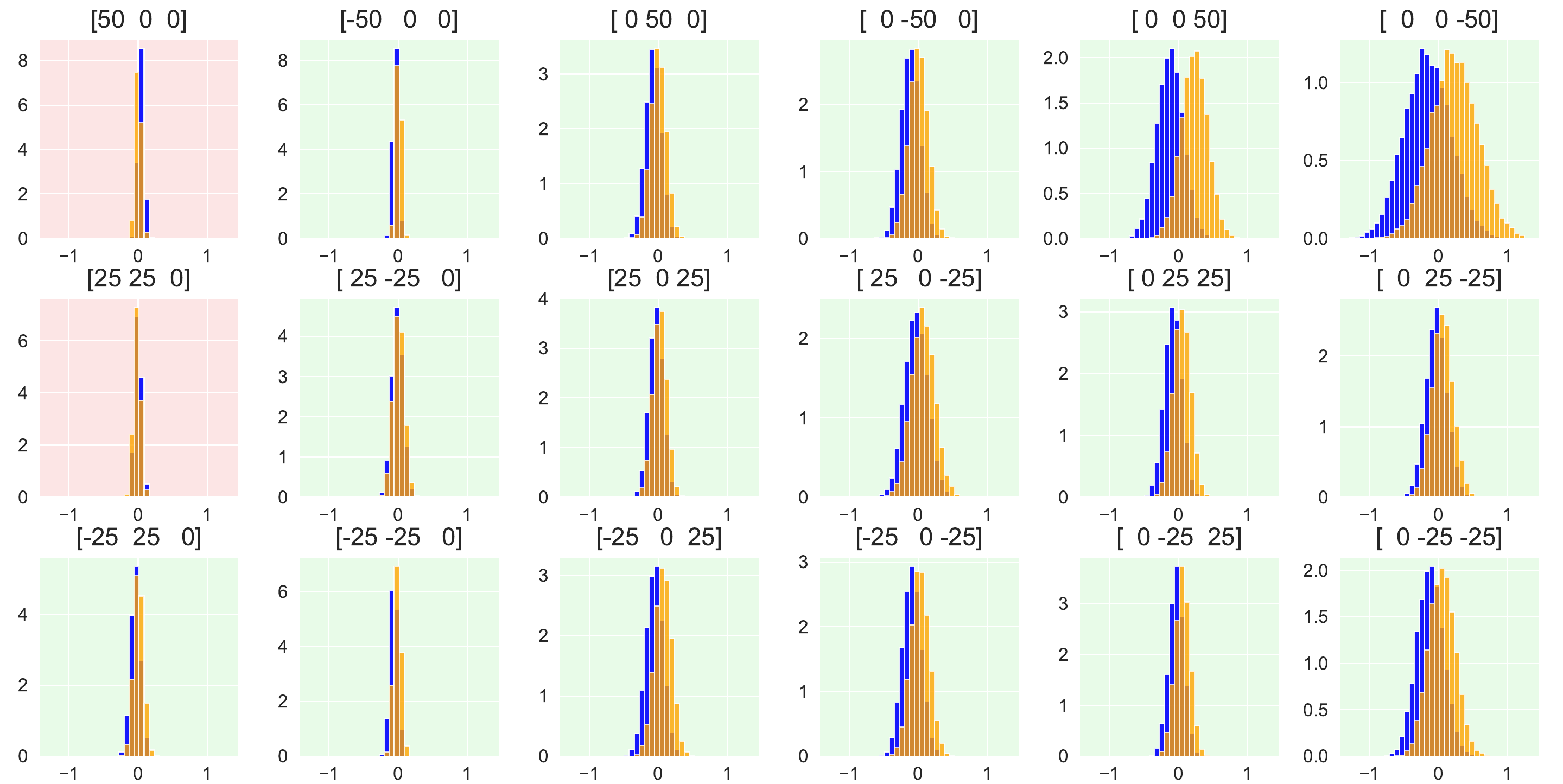}
    \caption{\footnotesize $\Delta V_{h,shock}$ distribution for different shocks for the linear amortization scheme.}
    \label{fig: HedgeEigenLinear}
\end{figure}

\begin{table}[b!]
    \centering
        \caption{\footnotesize Cost of the different \textit{OpR-MiM} and \textit{Eigen} hedging strategies and relative difference.}
    \begin{tabular}{|c|c|c|c|}
        \hline
        & \textit{OpR-MiM} & \textit{Eigen} & \textit{Difference} \\
        \hline
        {Bullet} & 49.17 & 52.90 & 7.59\% \\
        \hline
        {Linear} & 15.93 & 17.34 & 8.85\% \\
        \hline
    \end{tabular}
    \label{tab: HedgingCostEigen}
\end{table}

For both bullet and linear contracts, most of the shocks show a lower negative variation when \textit{Eigen} is preferred to \textit{OpR-MiM}. The few exceptions can be explained by the fact that the optimization in \eqref{eq: MinimizationEigen} prioritizes the eigenvalues bearing the highest concavity, possibly leading to a worse performance along the other, less impactful, directions. However, the strategy \textit{Eigen} successfully improves the hedge performance for the most significant shocks, while only slightly deteriorating the results for the less significant shocks. Compared to the strategy \textit{OpR-MiM}, the investment required in \textit{Eigen} is higher (see \Cref{tab: HedgingCostEigen}), and in general, it increases the more weight $k_{eig}$ is increased.

\section{Conclusion}
\label{sec: Conclusion}

We have proposed a framework for the valuation of the prepayment option embedded in mortgages when the driver of prepayment is the borrower's relocation to a new house (EPOR). We have introduced a stochastic housing market activity and assessed the impact of such a model choice on the EPOR value. We provided an efficient and intuitive pricing formula for the EPOR value corresponding to the price of an EU-type swaption with stochastic maturity. We investigated the theoretical rationale of the mismatch between the stochastic housing market activity and a baseline ``deterministic'' model. The difference is explained by the covariance of the housing market activity risk factor. The exposure generated by the EPOR is hedged using tradable market instruments. We propose strategies that allow for parsimonious, explainable hedges, while closely fitting the EPOR Delta and Gamma profiles. Because of the non-hedgeable nature of the housing market activity risk factor, a more robust, actuarial hedge is suggested. We showed that increased investment in tradable instruments is sufficient to generate a hedging strategy that reduces the potential losses due to extreme housing market activity realizations when compared to a standard hedge.

\section*{Acknowledgments}

L. P. wants to thank the Rabobank ``squads'' \emph{Treasury Analytic and Innovation} and \emph{Prepayment} for their help in developing this research. Special thanks go to Dr. Francois Nissen and Prof. Antoon Pelsser for fruitful discussions and inspiration, and to Mr. Pedro Iraburu for the tireless support provided when dealing with the prepayment data.

\bibliography{Settings/Bibliography}

\myappendix
\section{Proof of \texorpdfstring{\Cref{cor: ConvexityAdjustment}}{}}
\label{app: ProofConvexityAdjustment}
\begin{proof}[Proof of \Cref{cor: ConvexityAdjustment}]
    Observe that, for every $T\in\T$, the functional Taylor expansion of $f(h)=f^h$, see \eqref{eq: FunctionalFTau}, centered in $\Bar{h}$ reads:
    \begin{equation}
    \label{eq: TaylorDensityTau}
        f^h(T) = f^{\Bar{h}}(T)+\int_{t_0}^T \nabla_T^{\Bar{h}}(t)\Delta h(t) \d t+\frac{1}{2}\int_{t_0}^T\int_{t_0}^T \H_{T}^{\Bar{h}}(s,t)\Delta h(s)\Delta h(t)\d s \d t+\O\big(||\Delta h||_T^3\big),
    \end{equation}
    for $\nabla_T^{\Bar{h}}$ and $\H_{T}^{\Bar{h}}$ given by:
    \begin{equation*}
        \nabla_T^{\Bar{h}}(t) = \frac{\delta f^h(T)}{\delta h(t)}\bigg|_{h=\Bar{h}},\quad \H_{T}^{\Bar{h}}(s,t)=\frac{\delta^2 f^h(T)}{\delta h(s) \delta h(t)}\bigg|_{h=\Bar{h}},\quad s,t\in[t_0,T].
    \end{equation*}
    The relationship in \Cref{eq: EPORValueApproximationCOnvexityAdj,eq: NonlinearAdjustmentTerm} follows by substitution of \eqref{eq: TaylorDensityTau} into \Cref{eq: ValueEPORIntegral}, while noticing that $\overline{\Delta h(t)}=\E^\Q\big[\Delta h(t) \big|\F_h(t_0)\big]=0$, for every $t\in[t_0,T]$. The quantity in \Cref{eq: HessianConvexityAdjustment} is obtained by direct derivation. This concludes the proof. 
\end{proof}

\section{Derivation for the discrete approximation of the density Hessian}
\label{app: DiscretizedHessian}
For a given $T\in\T$, we consider the time partition $t_k=t_0 + k\Delta t$, $k=0,\dots,K$ and $\Delta t=(T-t_0)/K$ for some positive integer $K$. We define $\h=[h_1,\dots,h_K]$ with $h_k=h(t_k)$, and we assume $f(h)$ in \eqref{eq: FunctionalFTau} to be approximated by:
\begin{equation*}
    f(h)\approx f(\h)= \lambda(h_K)\e^{-\Sigma(\h)}
\end{equation*}
with $\lambda$ as given in \eqref{eq: RelocationTimeTau} and $\Sigma(\h)$ defined by:
\begin{equation*}
    \Sigma(\h)=\frac12\sum_{k=1}^K(\lambda(h_k)+\lambda(h_{k+1}))\Delta t=\sum_{k=0}^{K} \lambda(h_k)\Delta t-\frac12(\lambda(h_0)+\lambda(h_K))\Delta t.
\end{equation*}
Indicate by $\Sigma'_i(\h)=\frac{\partial \Sigma(\h)}{\partial h_i}$ and $\Sigma''_{i,j}(\h)=\frac{\partial^2 \Sigma(\h)}{\partial h_i \partial h_j}$, then:
\begin{equation*}
    \Sigma'_i(\h) = \lambda'(h_i)\Delta t\1_i,\qquad
    \Sigma''_{i,j}(\h) = \lambda''(h_i)\Delta t\1\{i=j\}\1_i,
\end{equation*}
where $\1\{\cdot\}$ is the indicator function and $\1_i=1-\frac12(\1\{i=0\}+\1\{i=K\})$.
The first derivative $\frac{\partial f}{\partial h_i}$ is given by:
\begin{equation*}
    \frac{\partial f}{\partial h_i}=\Big(-\Sigma'_i(\h) + \frac{\lambda'(h_K)}{\lambda(h_K)}\1\{i=K\}\Big) f(\h).
\end{equation*}
The Hessian represented in \Cref{fig: HessianHeatmapAndIntegral}a is given by:
\begin{align*}
    \frac{\partial^2 f}{\partial h_i\partial h_j}&= f(\h)\Big(-\Sigma'_j(\h) + \frac{\lambda'(h_K)}{\lambda(h_K)}\1\{j=K\}\Big)\Big(-\Sigma'_i(\h) + \frac{\lambda'(h_K)}{\lambda(h_K)}\1\{i=K\}\Big)\\
    &\qquad\qquad+ f(\h)\frac{\partial}{\partial h_j}\bigg(-\Sigma'_i(\h) + \frac{\lambda'(h_K)}{\lambda(h_K)}\1\{i=K\}\bigg)\\
    &=f(\h)\Bigg\{\Sigma'_j(\h)\Sigma'_i(\h)-\Sigma''_{i,j}(\h)+\frac{\lambda''(h_K)}{\lambda(h_K)}\1\{i=j=K\}\\
    &\qquad\qquad\qquad\qquad  -\Big(\Sigma'_j(\h)\1\{i=K\}+\Sigma'_i(\h)\1\{j=K\}\Big)\frac{\lambda'(h_K)}{\lambda(h_K)}\Bigg\}.
\end{align*}
Because of the time discretization employed here, the sum of the components of the Hessian is an approximation of $\int_{t_0}^T\int_{t_0}^T \H^{\Bar{h}}_T(s,t)\d s \d t$, i.e. $\sum_{i,j} \frac{\partial^2 f}{\partial h_i\partial h_j}\approx \int_{t_0}^T\int_{t_0}^T \H^{\Bar{h}}_T(s,t)\d s \d t$, and it is used to generate \Cref{fig: HessianHeatmapAndIntegral}b.

\section{Swap Delta and Gamma profiles}
\label{app: SwapDeltaGamma}
Let us consider a general receiver swap with payment dates $\T_{\tt{p}}$, swap rate $\kappa$, fixed-rate $K$. The value of such a product reads:
\begin{equation*}
    V = A(K-\kappa),
\end{equation*}
where  $\kappa=\kappa(t_0)$ and $A\equiv A(t_0)$ is defined as in \eqref{eq: Annuity}, assuming $N(t)\equiv 1$ for every $t$ (i.e. $A=\sum_{t_j\in\T_{\tt{p}}} \Delta t_j P_j$ for $\Delta t_j = t_j - t_{j-1}$ and $P_j=P(t_0;t_j)$). Let us assume the ZCB curve, $T\mapsto P(t_0;T)$, or -- equivalently -- the YC, is calibrated on a set of $I$ market quotes $\S_{yc}$ (see \Cref{sec: Hedging}).
In this setting, we derive the analytic expression of receiver swap Delta and Gamma profiles w.r.t. the calibrating instruments, $\S_{yc,i}$, $i=1,\dots,I$. Particularly, we obtain the Delta profile:
\begin{equation*}
    \frac{\partial V}{\partial \S_{yc,i}} = \sum_{t_j\in\T_{\tt{p}}} \Delta t_j \frac{\partial P_j}{\partial \S_{yc,i}} (K-\kappa) - A \frac{\partial \kappa}{\partial \S_{yc,i}},\qquad i=1,\dots,I,
\end{equation*}
and the Gamma profile, for $i,\Bar{i}=1,\dots,I$:
\begin{align*}
    \frac{\partial^2 V}{\partial \S_{yc,i} \partial \S_{yc,\Bar{i}}} &= \sum_{t_j\in\T_{\tt{p}}} \Delta t_j \frac{\partial^2 P_j}{\partial \S_{yc,i} \partial \S_{yc,\Bar{i}}} (K-\kappa) - A \frac{\partial^2 \kappa}{\partial \S_{yc,i} \partial \S_{yc,\Bar{i}}}\\
    & -  \sum_{t_j\in\T_{\tt{p}}} \Delta t_j\bigg( \frac{\partial P_j}{\partial \S_{yc,i}} \frac{\partial \kappa}{\partial \S_{yc,\Bar{i}}} + \frac{\partial P_j}{\partial \S_{yc,\Bar{i}}} \frac{\partial \kappa}{\partial \S_{yc,i}} \bigg).
\end{align*}

Typically, at the money (ATM) swaps are selected as calibrating instruments. When we consider the Delta and Gamma profiles of one of the calibrating instruments, we observe that the common intuition that ``linear instruments have vanishing Gamma,'' is not true in this framework. Indeed, let $\S_{yc,i^*}$ be the swap rate of the target calibrated instrument and $V_{yc,i^*}$ its value, then the Delta and Gamma profiles are simplify and read:
\begin{align*}
     \frac{\partial V_{yc,i^*}}{\partial \S_{yc,i}} &=
     \begin{cases}
         - A,&\quad i=i^*,\\
         0,&\quad i\neq i^*,
     \end{cases}\\
     \frac{\partial^2 V_{yc,i^*}}{\partial \S_{yc,i} \partial \S_{yc,\Bar{i}}} &= 
     \begin{cases}
         -  2\sum_{t_j\in\T_{\tt{p}}} \Delta t_j\frac{\partial P_j}{\partial \S_{yc,i^*}},&\quad i=\Bar{i}=i^*,\\
        -  \sum_{t_j\in\T_{\tt{p}}} \Delta t_j\frac{\partial P_j}{\partial \S_{yc,\Bar{i}}},&\quad i=i^*,\Bar{i}\neq i^*,\\
        -  \sum_{t_j\in\T_{\tt{p}}} \Delta t_j\frac{\partial P_j}{\partial \S_{yc,i}},&\quad i\neq i^*,\Bar{i}= i^*,\\
        0,& \quad i\neq i^*,\Bar{i}\neq i^*.
     \end{cases}
\end{align*}
Hence, since in general $\frac{\partial P_j}{\partial \S_{yc,i}}\neq 0$ for at least some $t_j\in\T_{\tt{p}}$, we observe non-vanishing Gamma even for the instruments we used to build the ZCB curve (or the YC). However, the magnitude of the Gamma profile of a swap is orders of magnitude smaller than the Gamma of the corresponding swaption.

\end{document}